\pdfoutput=1
\documentclass[3p,authoryear]{elsarticle}

\usepackage{amsmath,amssymb,amsfonts,amsthm}
\usepackage{graphicx}
\usepackage{booktabs}
\usepackage{siunitx}
\usepackage{algorithm}
\usepackage{algorithmic}
\usepackage{multirow}
\usepackage{xcolor}
\usepackage{colortbl}
\usepackage{bm}
\usepackage{lineno}
\usepackage{enumitem}
\usepackage{tcolorbox}
\usepackage{tikz}
\usetikzlibrary{positioning,shapes.geometric,arrows.meta}
\usepackage{hyperref}
\usepackage{cleveref}
\crefformat{section}{#2#1#3}
\Crefformat{section}{#2#1#3}

\providecommand{\doi}[1]{\href{https://doi.org/#1}{\texttt{doi:}#1}}

\newtheorem{assumption}{Assumption}
\newtheorem{theorem}{Theorem}
\newtheorem{lemma}{Lemma}
\newtheorem{proposition}{Proposition}
\newtheorem{corollary}{Corollary}
\newtheorem{remark}{Remark}

\crefname{assumption}{Assumption}{Assumptions}
\Crefname{assumption}{Assumption}{Assumptions}
\crefname{theorem}{Theorem}{Theorems}
\Crefname{theorem}{Theorem}{Theorems}
\crefname{lemma}{Lemma}{Lemmas}
\Crefname{lemma}{Lemma}{Lemmas}
\crefname{proposition}{Proposition}{Propositions}
\Crefname{proposition}{Proposition}{Propositions}
\crefname{corollary}{Corollary}{Corollaries}
\Crefname{corollary}{Corollary}{Corollaries}
\crefname{remark}{Remark}{Remarks}
\Crefname{remark}{Remark}{Remarks}
\crefname{definition}{Definition}{Definitions}
\Crefname{definition}{Definition}{Definitions}

\newcommand{\R}{\mathbb{R}}
\newcommand{\xvec}{\mathbf{x}}
\newcommand{\zvec}{\mathbf{z}}
\newcommand{\uvec}{\mathbf{u}}
\newcommand{\vvec}{\mathbf{v}}
\newcommand{\evec}{\mathbf{e}}
\newcommand{\wvec}{\mathbf{w}}
\newcommand{\emax}{\bar{\mathbf{e}}}
\newcommand{\wmax}{\bar{\mathbf{w}}}

\journal{Advances in Space Research}

\begin{document}

\begin{frontmatter}

\title{Horizon-Dependent Tube MPC for Spacecraft Rendezvous on Elliptical Orbits with Conditional Robust Constraint Satisfaction}

\author[inst1]{Omer Burak Iskender\corref{cor1}}
\cortext[cor1]{Corresponding author}
\ead{iske0001@e.ntu.edu.sg}

\address[inst1]{School of Electrical and Electronic Engineering, Nanyang Technological University, Singapore}

\begin{abstract}
Spacecraft rendezvous on elliptical orbits must hold a safety corridor under navigation noise, unmodelled perturbations, and thrust errors driven by propellant mass uncertainty. This paper develops a tube-based model predictive controller for the Yamanaka--Ankersen linear time-varying dynamics, carrying constraint-tightening tube methods from circular to elliptical orbits. A horizon-dependent, element-wise error bound propagates the actual closed-loop matrices along the prediction window, so early prediction steps keep nearly the full corridor that a constant-width tube would surrender. A multiplicative-to-additive conversion folds mass and thrust uncertainty into the same tightening recursion, and a Perron--Frobenius spectral-radius condition supplies a computable certificate that the tightening converges, with an explicit input-to-state stability gain. In a paired Monte Carlo campaign on a Mars sample-return orbit, the tube controller cuts mean corridor violations by more than an order of magnitude against a nominal predictive baseline at comparable fuel cost. A nonlinear truth-model test with oblateness perturbations well beyond the assumed disturbance budget leaves the corridor unviolated, and when the design envelope is exceeded the tightened problem becomes infeasible and the controller reverts to a saturated linear fallback, making the loss of guarantee explicit rather than silent. The construction requires only element-wise arithmetic and an open-source quadratic-programming solver.
\end{abstract}

\begin{keyword}
tube MPC \sep spacecraft rendezvous \sep elliptical orbit \sep robust control \sep LTV systems
\end{keyword}

\end{frontmatter}

\begin{center}
\small\itshape
Accepted for publication in Advances in Space Research.
\end{center}

\section*{Nomenclature}
\begin{tabbing}
\hspace{2.8cm}\= \kill
$a$ \> = semi-major axis, m \\
$A_k$ \> = discrete YA state matrix at step $k$, $\R^{6\times 6}$ \\
$\bar{A}$ \> = element-wise worst-case matrix, $\max_k |A_{\mathrm{cl},k}|$ \\
$A_{\mathrm{cl},k}$ \> = closed-loop matrix, $A_k + B_k K$ \\
$B_k$ \> = discrete input matrix at step $k$, $\R^{6\times 3}$ \\
$e$ \> = orbital eccentricity \\
$\evec_k$ \> = tracking error, $\xvec_k - \zvec_k$ \\
$\emax_j$ \> = element-wise error bound at horizon step $j$ \\
$\gamma_{\mathrm{ISS}}$ \> = input-to-state stability gain \\
$K$ \> = tube feedback gain, $\R^{3\times 6}$ \\
$N$ \> = MPC prediction horizon (steps) \\
$n$ \> = mean motion, rad/s \\
$\nu$ \> = true anomaly, rad \\
$P$ \> = terminal cost weight matrix \\
$Q,\;R$ \> = state and input running cost matrices \\
$\rho(\cdot)$ \> = spectral radius \\
$\alpha,\;\alpha_N$ \> = terminal level set and its horizon-deflated value \\
$\lambda_\star$ \> = worst-case contraction-loss ratio in the $P$-metric \\
$\varepsilon$ \> = DARE regularisation scalar (\Cref{eq:DARE}) \\
$\varepsilon_*$ \> = Young-inequality weight in \Cref{eq:alpha} \\
$\mathcal{X}_f$ \> = ellipsoidal terminal set \\
$\nu_\star$ \> = worst-case orbit phase for the terminal-set design \\
$T_s$ \> = sampling period, s \\
$\uvec_k$ \> = applied control input (velocity increment), m/s \\
$\vvec_k$ \> = nominal planned input (QP variable), m/s \\
$\wvec_k$ \> = additive bounded disturbance \\
$\wmax$ \> = process-noise disturbance bound vector \\
$\wmax_j$ \> = per-step total disturbance bound (process noise $+$ mass) \\
$\delta_{m,\max}$ \> = maximum mass/thrust uncertainty fraction \\
$\xvec_k$ \> = relative state, $[r_x,\;r_y,\;r_z,\;v_x,\;v_y,\;v_z]^\top$ \\
$\xvec_s$ \> = setpoint state \\
$\zvec_k$ \> = nominal state trajectory
\end{tabbing}

\section*{Notation}
\label{sec:notation}

For a vector $v\in\R^n$ and a matrix $M\in\R^{m\times n}$, $|v|$ and $|M|$ denote element-wise absolute values; inequalities between vectors or matrices of equal dimension are read element-wise. $\rho(M)=\max_i|\lambda_i(M)|$. The row-wise triangle inequality $|Mv|\le|M|\,|v|$ and, for non-negative $A\le B$, $\rho(A)\le\rho(B)$~\cite{Horn_Johnson}. ``Element-wise box'' is the hyper-rectangle $\{x:|x|\le\bar x\}$.

\section{Introduction}
\label{sec:intro}

Active debris removal~\cite{ClearSpace2021,Astroscale2022}, in-orbit servicing, and planetary sample return require autonomous proximity operations on orbits whose eccentricity is set by insertion accuracy and mission budget: LEO values of $0.001$--$0.01$, GTO up to $0.73$, Mars-capture ellipses of $0.05$--$0.3$ are all routine~\cite{Sullivan2017,Fehse2003}. The Hill--Clohessy--Wiltshire (HCW) model~\cite{HCW1960} underpinning most rendezvous studies assumes a constant orbital rate; position-propagation error then grows quadratically with $e$ and linearly with horizon. At $e=0.2$ a 30-step HCW prediction departs from Keplerian propagation by kilometres (\Cref{fig:ya_vs_cw}); a controller using HCW must inflate robustness margins or accept violations.

\paragraph{A. Rendezvous on elliptical orbits.}
Few constrained RPO studies move beyond CW. Inalhan et~al.~\cite{Inalhan2002} demonstrated fuel-optimal planning on the YA STM~\cite{YA_original}; Hartley et~al.~\cite{Hartley2015b} solved the LTV MPC in real time on FPGA on the identical MSRE orbit; neither provides tube-based robustness. In ASR, Lim et~al.~\cite{Lim2018} used heuristic constraint backoffs without invariant-set tightening. The robustness mechanism on eccentric orbits is therefore either absent or heuristic.

\paragraph{B. Tube-based MPC for spacecraft.}
Tube MPC~\cite{Langson2004,Mayne2005} decomposes the control into a nominal QP input and feedback $K(\xvec_k-\zvec_k)$ that confines tracking error to a bounded set; for LTI systems the tube width is the mRPI set~\cite{Rakovic2005}. Surveys~\cite{Eren2017,Mayne2014} identify robust constraint satisfaction as an open problem in MPC for aerospace~\cite{Weiss2015}.

Within ASR, Dong et~al.~\cite{Dong2020} published the first tube-based MPC for rendezvous, with constant-width mRPI on CW dynamics and output feedback; a follow-up~\cite{Dong2024} added an online disturbance observer that narrows the tube under mild perturbations but stays on CW with constant-width tubes. Zhu et~al.~\cite{Zhu2018} use worst-case reachable-set backoffs without the tube decomposition. Kang et~al.~\cite{Kang2025} tighten using the closed-loop rather than open-loop reachable set, reducing conservatism but on CW.

Outside ASR, Specht and Mooij~\cite{Specht2023} designed multi-phase tube MPC for Envisat capture with constant polytopic tubes, Oestreich and Linares~\cite{Oestreich2023} proposed adaptive ellipsoidal tubes updated via per-step SDP, and Mammarella et~al.~\cite{Mammarella2017,Mammarella2020} validated box-tube MPC on an air-bearing testbed, all under LTI dynamics. Bokor et~al.~\cite{Bokor2025} addressed robust MPC for rendezvous under sector-bounded nonlinearities; the formulation handles aerodynamic and gravity-gradient nonlinearities but inherits CW dynamics. On the general LTV front, Yu and Cannon~\cite{YuCannon2016} established tube MPC with time-varying constraint tightening, and K\"ohler et~al.~\cite{Kohler2019} extended the ideas to tracking. Closest to the present work, Bumroongsri~\cite{Bumroongsri2015} formulated tube-based MPC for LTV systems with a horizon-dependent invariant set for chemical-process applications without addressing orbital mechanics, mass uncertainty, or eccentric-orbit specialisation. The present manuscript builds on that horizon-dependent tightening and specialises it to the YA dynamics.

\paragraph{C. Safety corridors and keep-out zones.}
Richards et~al.~\cite{Richards2002} formulated mixed-integer LPs for obstacle avoidance; Gavil\'an et~al.~\cite{Gavilan2012} added chance constraints. In ASR, Wang et~al.~\cite{WangASR2023} iteratively convexify non-convex keep-out constraints around the current trajectory, enforced on the nominal path without a disturbance margin. Iskender et~al.~\cite{Iskender2019,Iskender2020} validated MPC docking with dual-quaternion kinematics in simulation and hardware. All these methods rely on stochastic margins or empirical backoffs; none provides a deterministic worst-case bound on constraint violation.

\paragraph{D. Uncertainty modelling in relative motion.}
Uncertainty in RPO arises from navigation errors, propellant-driven mass changes, thruster execution errors, and unmodelled perturbations ($J_2$, third-body). In ASR, Zheng and Luo~\cite{Zheng2024} performed closed-loop uncertainty analysis for non-cooperative rendezvous, propagating linearised covariance to generate robust trajectory corridors, a stochastic rather than set-membership approach. The multiplicative-to-additive conversion introduced here absorbs mass uncertainty $\delta_m$ into a per-step additive bound $\wmax_j = \wmax + \delta_{m,\max}|B_{k+j}|u_{\max}$, capturing parametric uncertainty within the box-tightening framework without online covariance propagation or SDP-based tube updates.

\paragraph{E. Gap and contribution statement.}
Horizon-dependent tightening for LTV systems exists generically~\cite{YuCannon2016,Bumroongsri2015,Kohler2019}; this paper specialises it to spacecraft rendezvous on elliptical orbits with four additions that do not appear in the cited LTV tube MPC literature:
\begin{enumerate}[label=(C\arabic*),leftmargin=2.5em,nosep]
  \item \textbf{YA specialisation with an orbit-specific spectral-radius certificate.} The Perron--Frobenius condition $\rho(\bar{A})<1$, where $\bar{A}=\max_k|A_k+B_kK|$ is taken element-wise over one orbital period, is verified analytically and numerically for the Yamanaka--Ankersen STM, giving a computable eccentricity threshold $e_{\max}$ below which the box-tube tightening converges. This converts a generic LTV requirement into a closed-form rendezvous certificate.
  \item \textbf{Mass-aware multiplicative-to-additive lifting.} A per-step bound $\bar{w}_j = \bar{w} + \delta_{m,\max}|B_{k+j}|u_{\max}$ folds parametric thrust/mass uncertainty into the same additive tightening recursion that handles process noise, eliminating the need for a separate parametric tube. The per-step structure follows the time-varying $B_k$ of the YA model.
  \item \textbf{Ellipsoidal terminal invariant set for the LTV closed loop.} A single ellipsoid $\mathcal{X}_f=\{x:(x-x_s)^\top P(x-x_s)\le\alpha\}$ computed offline from the DARE at the worst-case orbit phase closes, in theory, the recursive-feasibility gap that constant-width tubes leave open. This contribution is derived and its level set computed in closed form, but the Monte Carlo campaign reported here runs without it (\Cref{rem:terminal_scope}); enforcing it exactly requires a second-order-cone solver rather than the QP solver used throughout.
  \item \textbf{Empirical conservatism map and direct comparison against a classical constant-width tube on the identical scenario}, demonstrating that the horizon-dependent recursion delivers a $\sim 3.5\times$ reduction in violations and a $\sim 3.4\times$ reduction in fallback events at the design envelope boundary (\Cref{sec:conservatism_compare_mc}).
\end{enumerate}
The construction uses element-wise arithmetic plus one quadratic terminal constraint solvable by the open-source OSQP solver; no polytope vertex enumeration, no per-step SDP, and no commercial software are required. Augmented-dynamics extensions (fuel slosh, modal flexibility) are pursued in companion formulations~\cite{companion_slosh,companion_flex}. All Monte~Carlo results are bit-reproducible from the seed-deterministic Python implementation released with the paper.

A preliminary version of the horizon-dependent tightening recursion (C2) and the Perron--Frobenius certificate (C1) was presented at the 77th International Astronautical Congress~\cite{Iskender2026EllipticalTube}, where they were applied to a phased Hartley-style approach and a geostationary co-location case study. The present paper extends that work in four directions: the ellipsoidal terminal set and the recursive-feasibility and input-to-state-stability results of \Cref{sec:tube} (C3), which the conference version does not carry, derived here but not exercised in the reported campaign (\Cref{rem:terminal_scope}); the paired comparison against a constant-width tube on an identical scenario (C4); a nonlinear truth-model test with $J_2$ oblateness driven well past the assumed disturbance budget; and the cross-body envelope study spanning Mars, low Earth, geostationary transfer, and lunar orbits.

\section{Problem Formulation}
\label{sec:problem}

The scenario and formulation match Hartley et~al.~\cite{Hartley2012}: identical MSRE orbit ($a=4643$~km, $e=0.204$), YA relative dynamics, impulsive delta-$v$, and receding-horizon LTV MPC. The contribution is to replace that controller with a horizon-dependent tube MPC handling mass uncertainty and unmodelled disturbances without constraint violation. This section records the dynamics, assumptions, and constraints used by both.

\subsection{Yamanaka--Ankersen Relative Dynamics}

Consider two spacecraft orbiting a central body: a \emph{chief} (target) on a known Keplerian ellipse, and a \emph{deputy} (chaser) whose position and velocity are measured relative to the chief in a local-vertical--local-horizontal (LVLH) frame. The $x$-axis points along the chief's velocity direction (V-bar), $z$ points radially outward (R-bar), and $y$ completes the right-handed triad (cross-track, H-bar). The relative state is
\begin{equation}\label{eq:state}
  \xvec = \begin{bmatrix} r_x & r_y & r_z & v_x & v_y & v_z \end{bmatrix}^\top \in \R^6,
\end{equation}
with positions in metres and velocities in metres per second.

The relative motion is governed by the linearised Tschauner--Hempel equations~\cite{TH1965}, whose closed-form solution is the Yamanaka--Ankersen state transition matrix (STM) $\Phi(\nu,\nu_0)\in\R^{6\times 6}$~\cite{YA_original}. With zero-order-hold sampling at period $T_s$, the discrete-time model is
\begin{equation}\label{eq:dynamics}
  \xvec_{k+1} = A_k\,\xvec_k + B_k\,\uvec_k + \wvec_k,
  \qquad
  A_k = \Phi(\nu_{k+1},\nu_k),\quad
  B_k = A_k\begin{bmatrix} \mathbf{0}_{3\times 3}\\ I_3 \end{bmatrix},
\end{equation}
where $\uvec_k\in\R^3$ is the impulsive delta-$v$ applied at the start of the interval and $\wvec_k\in\R^6$ is the additive disturbance. Both $A_k$ and $B_k$ depend on the orbit phase $\nu_k$ and trace a periodic cycle over one orbit; as $e\to 0$ the YA STM reduces continuously to the HCW matrix. The MSRE-orbit propagation gap between the YA and HCW models, and the residual against the nonlinear two-body truth, are quantified in \Cref{fig:ya_vs_cw}.

\begin{figure}[htbp]
  \centering
  \includegraphics[width=0.7\textwidth]{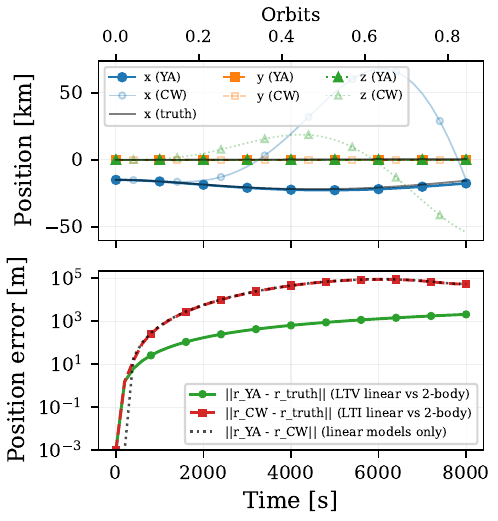}
  \caption{Free-drift propagation comparison on the MSRE orbit ($e=0.204$). The CW model (dashed) accumulates kilometre-scale position errors relative to the exact YA model (solid) over a 30-step prediction horizon. Grey shading indicates the absolute error magnitude.}
  \label{fig:ya_vs_cw}
\end{figure}

\subsection{Assumptions}

The framework rests on four standing hypotheses, stated here so that each later guarantee can name the ones it uses.

\begin{assumption}[Keplerian Chief Orbit and Linearisation Validity]\label{ass:dynamics}
The chief spacecraft follows a Keplerian elliptical orbit with known classical elements $(a,e,i,\Omega,\omega)$ and eccentricity $0<e<1$. The relative separation between chaser and chief is small enough relative to the orbit radius that the linearised Tschauner--Hempel equations capture the relative dynamics within the disturbance budget of \Cref{ass:bounded_w}.
\end{assumption}

\begin{remark}[Scope of \Cref{ass:dynamics}]\label{rem:ass1_scope}
\Cref{ass:dynamics} is a small-separation linearisation. It is well satisfied during the corridor approach, station-keeping, and waypoint-transfer phases on which the present study evaluates the controller (separations $\le 15$~km, well below $1\%$ of the MSRE orbit radius). It is \emph{not} intended to cover far-rendezvous phasing manoeuvres at thousands of kilometres separation, nor terminal contact dynamics inside the docking envelope where contact forces dominate. The cross-body validation of \Cref{sec:cross_body} respects the same scope: corridor sizes are chosen to keep the second-order nonlinearity below the medium-tier additive bound $\bar{w}$, as quantified by the two-body validation in \Cref{sec:j2_truth} (residual $\sim 2.75$~m at 15~km offset).
\end{remark}

\begin{assumption}[Bounded Disturbance]\label{ass:bounded_w}
The disturbance satisfies $|\wvec_k|\leq\wmax$ element-wise for all $k\geq0$, where $\wmax\in\R^6_{\geq0}$ is known. This bound encompasses residual $J_2$ differential acceleration, solar radiation pressure, third-body perturbations, navigation-filter output noise, discretisation error, and any other bounded exogenous effect.
\end{assumption}

\begin{assumption}[State Feedback]\label{ass:state_feedback}
The controller has access to an estimate $\hat{\xvec}_k$ of the relative state $\xvec_k$ at each sampling instant. The estimation error $\xvec_k-\hat{\xvec}_k$ is treated as one of the bounded exogenous effects absorbed into $\wmax$ of \Cref{ass:bounded_w}; consequently, residual navigation-filter noise, sensor bias, and discretisation error are all carried inside the tightening recursion. The construction of the estimator itself (typical implementations couple a Kalman filter on LIDAR/camera relative-pose measurements~\cite{Iskender2019,Iskender2020}) is outside the scope of this paper.
\end{assumption}

\begin{assumption}[Controllability]\label{ass:controllable}
The pair $(A_k,B_k)$ is controllable for every $k$. This is guaranteed by three-axis thrust authority and the non-singularity of the YA STM for $e<1$.
\end{assumption}

\subsection{Constraint Formulation}

The actuator imposes element-wise bounds on the velocity increment:
\begin{equation}\label{eq:uconstraint}
  u_{\min}\leq\uvec_k\leq u_{\max},\qquad
  u_{\min}=-u_{\mathrm{a}}\mathbf{1}_3,\quad u_{\max}=+u_{\mathrm{a}}\mathbf{1}_3,
\end{equation}
with per-axis authority $u_{\mathrm{a}} = \SI{5}{\metre\per\second}$
(\Cref{tab:tiers}).
The state is confined to a safety corridor:
\begin{equation}\label{eq:xconstraint}
  x_{\min}\leq\xvec_k\leq x_{\max},
\end{equation}
with the specific values of $x_{\min}$ and $x_{\max}$ depending on the scenario (approach corridor, station-keeping box, or waypoint transfer) as detailed in \Cref{sec:setup}.

\begin{remark}[Disturbance budget]\label{rem:disturbance}
The dominant unmodelled acceleration on the MSRE orbit is the differential $J_2$ effect ($J_2=1.96{\times}10^{-3}$ for Mars), producing $\delta a_{J_2}\approx10^{-6}$~m/s$^2$ at $\Delta r=10$~km and yielding per-step bounds $\Delta v\approx2{\times}10^{-4}$~m/s, $\Delta r\approx0.02$~m over $T_s=200$~s. Navigation noise adds ${\sim}1$~m position and ${\sim}0.01$~m/s velocity per sample. The ``medium'' disturbance tier (\Cref{tab:tiers}) encompasses this budget; the ``extreme'' tier stress-tests beyond it.
\end{remark}

\section{Tube-Based MPC}
\label{sec:tube}

\subsection{Nominal MPC Inner Loop}

The tube MPC solves the same QP structure as nominal MPC~\eqref{eq:nomMPC}, with one critical modification: the constraints are \emph{tightened} to leave room for the worst-case tracking error. The QP decision variables are the nominal states $\zvec_{0:N}$ and nominal inputs $\vvec_{0:N-1}$, and the nominal dynamics propagate disturbance-free:
\begin{equation}\label{eq:nom_dyn}
  \zvec_{j+1} = A_{k+j}\,\zvec_j + B_{k+j}\,\vvec_j,\qquad j=0,\ldots,N{-}1.
\end{equation}
The initial condition is set to the current measured state, $\zvec_0=\xvec_k$, so that the nominal and true trajectories coincide at the start of each planning cycle.

\subsection{Error Dynamics and the Tube Feedback Law}

The control actually applied to the spacecraft is not the nominal input alone, but a sum of the planned input and a feedback correction:
\begin{equation}\label{eq:tube_law}
  \uvec_k = \vvec_k + K(\xvec_k - \zvec_k).
\end{equation}
Subtracting the nominal dynamics~\eqref{eq:nom_dyn} from the true dynamics~\eqref{eq:dynamics} and substituting the control law~\eqref{eq:tube_law} yields the \emph{error dynamics}:
\begin{equation}\label{eq:error_dyn}
  \evec_{k+1} = (A_k + B_k K)\,\evec_k + \wvec_k = A_{\mathrm{cl},k}\,\evec_k + \wvec_k,
\end{equation}
where $\evec_k=\xvec_k-\zvec_k$ is the tracking error and $A_{\mathrm{cl},k}=A_k+B_kK$ is the closed-loop error matrix at step $k$. The error evolves as a stable linear system driven by the disturbance; the feedback gain $K$ determines how quickly the error decays, and hence how wide the tube must be.

The gain $K$ is designed once, offline, via discrete LQR at a representative operating point:
\begin{equation}\label{eq:gain}
  K = -\texttt{dlqr}(A_0,\,B_0,\,Q_K,\,R_K),
\end{equation}
with $Q_K=\mathrm{diag}(10^3,10^3,10^3,10,10,10)$ and $R_K=I_3$. The large position weights in $Q_K$ (relative to $R_K$) produce a gain that aggressively suppresses position errors, keeping the tube narrow in the spatial dimensions that matter most for constraint satisfaction. Using a separate cost pair $(Q_K,R_K)$ for the tube gain, distinct from the MPC cost $(Q,R)$, decouples two design objectives: $K$ controls the tube width (robustness), while $Q$ and $R$ shape the nominal trajectory (optimality).

\subsection{Element-Wise Box Tightening for LTV Systems}
\label{sec:emax}

The recursion rests on one elementary fact about element-wise absolute values.

\begin{lemma}[Element-Wise Triangle Inequality]\label{lem:abs}
Let $M\in\R^{m\times n}$, $v\in\R^n$, and let $\bar v\in\R^n_{\ge 0}$ satisfy $|v|\le \bar v$ element-wise. Then
\begin{equation}\label{eq:absM}
  |Mv|\;\le\;|M|\,|v|\;\le\;|M|\,\bar v\qquad\text{element-wise},
\end{equation}
where $|\cdot|$ denotes element-wise absolute value as defined in \emph{\nameref{sec:notation}}.
\end{lemma}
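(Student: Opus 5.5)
The plan is to argue row by row, since both inequalities in \eqref{eq:absM} are statements about each of the $m$ components separately. Fix a row index $i\in\{1,\dots,m\}$. The $i$-th entry of $Mv$ is the finite sum $(Mv)_i=\sum_{l=1}^n M_{il}v_l$.

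For the first inequality, I will apply the ordinary scalar triangle inequality to this sum and then use multiplicativity of the absolute value on reals, $|M_{il}v_l|=|M_{il}|\,|v_l|$. This gives
\[
  |(Mv)_i|\le\sum_{l=1}^n|M_{il}|\,|v_l|=\bigl(|M|\,|v|\bigr)_i,
\]
which is exactly the $i$-th component of $|Mv|\le|M|\,|v|$.

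For the second inequality, I will use that every entry $|M_{il}|$ is non-negative and that the hypothesis gives $|v_l|\le\bar v_l$ for each $l$. Multiplying a scalar inequality by a non-negative number preserves it, so $|M_{il}|\,|v_l|\le|M_{il}|\,\bar v_l$ termwise. Summing over $l$ yields $\bigl(|M|\,|v|\bigr)_i\le\bigl(|M|\,\bar v\bigr)_i$.

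Since $i$ was arbitrary, both chains hold element-wise, in the sense defined in the Notation section. Concatenating them gives \eqref{eq:absM}.

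There is no real obstacle here. The only point that needs explicit mention is the non-negativity of the entries of $|M|$, because that is what makes the second step monotone. For a general signed matrix in place of $|M|$, the inequality $|v|\le\bar v$ would not transfer through the product. This same monotonicity is what the tightening recursion will later rely on when the bound is iterated through successive closed-loop matrices $|A_{\mathrm{cl},k}|$.
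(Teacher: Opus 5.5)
Your proof is correct and follows the paper's argument: the scalar triangle inequality applied row by row gives the first bound, and the non-negativity of the entries of $|M|$ makes $v\mapsto|M|\,v$ monotone, which gives the second.
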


\begin{proof}
For each row $i$, the standard triangle inequality on scalar sums gives $(|Mv|)_i = |\sum_l M_{il}v_l| \le \sum_l |M_{il}|\,|v_l| = (|M|\,|v|)_i$, establishing the first inequality. Because every entry of $|M|$ is non-negative, the map $v\mapsto |M|v$ is monotone with respect to the element-wise order: $|v|\le\bar v$ implies $|M|\,|v|\le|M|\,\bar v$, giving the second inequality.
\end{proof}

\begin{remark}[How \Cref{lem:abs} controls the input-side tightening]\label{rem:Klem}
Applied with $M=K\in\R^{3\times 6}$ and $v=\evec_j$, \Cref{lem:abs} yields $|K\evec_j|\le |K|\,|\evec_j|\le |K|\,\emax_j$ element-wise. This bound is the sole justification for the input tightening $u_{\min}+|K|\emax_j\le\vvec_j\le u_{\max}-|K|\emax_j$ that appears later in~\eqref{eq:tubeMPC_ucon}. The bound is valid at every horizon step $j\in\{0,\dots,N-1\}$ for which the error-bound recursion has been propagated (\Cref{prop:ebound} below). Consequently the input tightening uses the \emph{same} $\emax_j$ sequence as the state tightening, ensuring a single internally consistent bounding logic across both constraint classes.
\end{remark}

Define the element-wise error bound $\emax_j\in\R^6_{\geq0}$ by the recursion
\begin{equation}\label{eq:emax_init}
  \emax_0 = \mathbf{0},
\end{equation}
\begin{equation}\label{eq:emax_prop}
  \emax_{j+1} = |A_{\mathrm{cl},k+j}|\,\emax_j + \wmax_j,\qquad j=0,\ldots,N{-}1,
\end{equation}
where $|A_{\mathrm{cl},k+j}|$ is the matrix whose entries are the absolute values of those of $A_{\mathrm{cl},k+j}$, and $\wmax_j$ is a \emph{per-step} disturbance bound defined below. This recursion costs $\mathcal{O}(Nn_x^2)$ floating-point operations, negligible compared with the QP solve.

\subsubsection{Mass-Aware Per-Step Disturbance Bound}
\label{sec:mass_aware}

Thrust mismatch due to mass uncertainty introduces a multiplicative perturbation $(1+\delta_m)B_k$ with $|\delta_m|\le\delta_{m,\max}$. The induced disturbance is bounded element-wise by $|\delta_m B_k\uvec_k|\le\delta_{m,\max}|B_k|u_{\max}$ (per step, because $B_k$ varies with $\nu$). Combining with $\wmax$ gives a per-step total
\begin{equation}\label{eq:wmax_total}
  \wmax_j = \wmax + \delta_{m,\max}\,|B_{k+j}|\,u_{\max},\qquad j=0,\ldots,N{-}1,
\end{equation}
absorbing multiplicative uncertainty into the additive-tube framework without a separate tube. At the medium tier ($\delta_{m,\max}=5\%$) the mass contribution adds $50$--$60$~m per step, making total tightening $\sim 14\%$ of the $\pm 500$~m corridor.

Three properties make the recursion useful: a spectral-radius condition for the contraction (\Cref{lem:stability}), validity of the bound (\Cref{prop:ebound}), and convergence to a finite steady state (\Cref{prop:convergence}). \Cref{lem:stability} certifies that the non-negative matrix $|A_{\mathrm{cl},k}|$ is Schur-stable in the Perron--Frobenius sense, the condition the recursion~\eqref{eq:emax_prop} requires for $\emax_j$ to remain bounded.

\begin{lemma}[Closed-Loop Stability via Perron--Frobenius]\label{lem:stability}
Under \Cref{ass:dynamics,ass:controllable}, let $K$ be obtained from~\eqref{eq:gain}. Then:
\begin{enumerate}
\item[\emph{(i)}] $A_{\mathrm{cl},0}=A_0+B_0K$ is Schur stable: $\rho(A_{\mathrm{cl},0})<1$.
\item[\emph{(ii)}] The non-negative matrix $|A_{\mathrm{cl},0}|$ satisfies $\rho(|A_{\mathrm{cl},0}|)<1$.
\item[\emph{(iii)}] There exists a computable eccentricity threshold $e_{\max}>0$ such that for all orbits with $e<e_{\max}$ and for all $k$:
\begin{equation}\label{eq:spectral_bound}
  \rho(|A_{\mathrm{cl},k}|) < 1.
\end{equation}
\end{enumerate}
\end{lemma}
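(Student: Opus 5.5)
Part (i) is the standard LQR stabilisation result. Under \Cref{ass:controllable} the pair $(A_0,B_0)$ is controllable. The weight $Q_K=\mathrm{diag}(10^3,10^3,10^3,10,10,10)\succ 0$, so $(A_0,Q_K^{1/2})$ is observable. Hence the DARE at $(A_0,B_0,Q_K,R_K)$ has a unique stabilising solution $P_K\succ0$, and the gain~\eqref{eq:gain} makes $A_{\mathrm{cl},0}$ Schur. Concretely, I would record the Lyapunov identity
\begin{equation*}
A_{\mathrm{cl},0}^\top P_K A_{\mathrm{cl},0}-P_K=-(Q_K+K^\top R_K K)\prec 0,
\end{equation*}
which immediately gives $\rho(A_{\mathrm{cl},0})<1$.

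Part (ii) does not follow from (i) in general. The Notation section only gives $\rho(M)\le\rho(|M|)$, which is the wrong direction. So I would not try to prove it abstractly. Instead I would treat it as a certificate: $|A_{\mathrm{cl},0}|$ is a fixed nonnegative $6\times6$ matrix computed from the orbit data, and we verify $\rho(|A_{\mathrm{cl},0}|)<1$ by a Perron--Frobenius test. The cleanest such test is to exhibit a positive vector $d>0$ with $|A_{\mathrm{cl},0}|\,d<d$ element-wise. Then the weighted $\infty$-norm $\|M\|_d=\max_i (|M|d)_i/d_i$ is $<1$, hence $\rho<1$. A natural $d$ is the Perron eigenvector itself, or $d=(I-|A_{\mathrm{cl},0}|)^{-1}\mathbf 1$ if that inverse is nonnegative. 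The physical intuition behind the check is that the heavy position weighting makes $A_{\mathrm{cl},0}$ close to a strongly damped near-deadbeat map, so sign cancellations matter little. I expect this to be the main obstacle: there is no structural reason for absolute values to preserve contraction, so the honest content of (ii) is a numerical verification, and I would present it as such.

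For (iii) I would use continuity and monotonicity. Write $\bar A(e)=\max_k|A_{\mathrm{cl},k}(e)|$ element-wise over one period. The YA STM is continuous in $(\nu,e)$ and reduces to HCW as $e\to0$. In the circular case $A_k$ and $B_k$ are independent of $k$, so $A_{\mathrm{cl},k}=A_{\mathrm{cl},0}$ for every $k$ and $\bar A(0)=|A_{\mathrm{cl},0}|$; this holds when (ii) is checked at $e=0$, or when the same design point is used. Since $|A_{\mathrm{cl},k}|\le\bar A(e)$, the Notation fact gives
\begin{equation*}
\rho(|A_{\mathrm{cl},k}|)\le\rho(\bar A(e))\qquad\text{for all }k.
\end{equation*}
The map $e\mapsto\bar A(e)$ is continuous as a maximum of continuous functions over the compact phase interval $[0,2\pi]$, and the spectral radius is continuous in the matrix entries. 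Therefore $\rho(\bar A(e))<1$ on some interval $[0,e_{\max})$. For computability I would define $e_{\max}=\sup\{e:\rho(\bar A(e'))<1\ \forall e'\le e\}$ and evaluate it by a bisection or sweep in $e$, sampling the phase $\nu$ finely, with the $d$-vector test from (ii) serving as a robust certificate at each grid point. The positivity $e_{\max}>0$ is guaranteed by continuity from (ii) at $e=0$.
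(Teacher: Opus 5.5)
Your parts (i) and (ii) match the paper's argument. Part (i) is the LQR/Lyapunov consequence of controllability and $Q_K,R_K\succ 0$. Part (ii) is treated as a numerically verified design condition rather than a consequence of (i), and your test $|A_{\mathrm{cl},0}|\,d<d$ with $d>0$ is a valid weighted-$\infty$-norm (Collatz--Wielandt) certificate for it.

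For (iii) both proofs are continuity arguments, but you anchor at a different point. The paper anchors at the verified design eccentricity $e=0.204$, where $\max_k\rho(|A_{\mathrm{cl},k}|)=0.128$. It notes that $\{e:\max_k\rho(|A_{\mathrm{cl},k}(e)|)<1\}$ is open and takes $e_{\max}$ as the first crossing \emph{above} the design point. That certifies the orbit of interest directly. However, the clause ``for all $e<e_{\max}$'' on the interval below $0.204$ then rests on numerically observed monotonicity, which the paper admits it does not prove. You anchor at the circular limit, where the closed loop is LTI, and define $e_{\max}$ as the supremum of the initial interval on which $\rho(\bar A(e'))<1$. That gives the ``for all $e<e_{\max}$'' clause from continuity alone, with no monotonicity needed. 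Because you bound through $\rho(\bar A(e))\ge\max_k\rho(|A_{\mathrm{cl},k}(e)|)$, the same threshold also certifies the hypothesis of \Cref{prop:convergence}. The cost is a possibly smaller $e_{\max}$ than the paper's.

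Two points need repair.

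\textbf{The anchor.} The condition $\rho(\bar A(0))<1$ is not supplied by (ii). Part (ii) concerns $A_{\mathrm{cl},0}$ of the orbit being designed for. For the MSRE design, $\bar A(0)$ is the HCW closed loop, not $|A_{\mathrm{cl},0}|$, so neither your ``same design point'' hedge nor your closing sentence goes through. You need a separate single-matrix check at $e=0$. You then still need a sweep showing no crossing on $[0,0.204]$ before the lemma says anything about MSRE.

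\textbf{Dependence of $K$ on $e$.} Through~\eqref{eq:gain}, the gain $K$ depends on $e$ via $(A_0,B_0)$. Continuity of $e\mapsto\bar A(e)$ therefore requires continuity of the stabilising DARE solution in the data, which holds under controllability and $Q_K\succ 0$. Alternatively, you must explicitly decide to freeze $K$ across $e$. The paper's proof is also silent on this point, although its bisection recomputes $K$ at each $e$.
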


\begin{proof}
(i) follows from standard LQR theory~\cite{Rawlings_Mayne}: $(A_0,B_0)$ controllable plus $Q_K,R_K\succ 0$ yield $\rho(A_{\mathrm{cl},0})<1$. (ii) does \emph{not} follow from (i). Only $\rho(|M|)\ge\rho(M)$ holds in general~\cite{Horn_Johnson}, which bounds $\rho(M)$ from above and says nothing about $\rho(|M|)$; the reverse implication is false for matrices whose stability relies on sign cancellation (\Cref{rem:pathology}). Condition (ii) is therefore a \emph{verifiable design condition}, not a consequence: it is checked numerically at the design point in \Cref{rem:spec_numerical}, where it holds with two orders of magnitude of margin. (iii) is a continuity argument over $e$: the entries of $A_{\mathrm{cl},k}$ depend continuously on $e$ and $\rho$ is continuous in the entries, so the set $\{e:\max_k\rho(|A_{\mathrm{cl},k}(e)|)<1\}$ is open and contains the verified design point; $e_{\max}$ is defined as the smallest $e$ above the design point at which $\max_k\rho(|A_{\mathrm{cl},k}|)=1$ and is located by bisection in \Cref{rem:spec_numerical}. Monotonicity of $\max_k\rho$ in $e$ is observed numerically over the sampled range (\Cref{fig:emax_threshold}) but is not proved here, so $e_{\max}$ should be read as the numerically located first crossing.
\end{proof}

\begin{remark}[Numerical verification of \Cref{lem:stability}]\label{rem:spec_numerical}
For the MSRE design point ($e=0.204$, $\nu_0=0$, $Q_K, R_K$ as in \Cref{tab:params}), direct evaluation gives $\rho(A_{\mathrm{cl},0})=0.0016$, $\rho(|A_{\mathrm{cl},0}|)=0.0019$, and $\|A_{\mathrm{cl},0}\|_\infty=0.006$, so $\delta_0\approx 0.998$. Sampling 200 equispaced true-anomaly values yields $\max_k\rho(|A_{\mathrm{cl},k}|)=0.128$ (\Cref{fig:spectral_radius_orbit}) and, for the element-wise maximum required by \Cref{prop:convergence}, $\rho(\bar A)=0.129$. The eccentricity threshold of part~(iii) is obtained by bisection: at each candidate $e$, $K$ is recomputed via LQR at $\nu=0$ and $\max_\nu\rho(|A_{\mathrm{cl}}(\nu)|)$ is evaluated over the same 200-point grid. The crossing $\max_\nu\rho(|A_{\mathrm{cl}}(\nu)|)=1$ occurs at $e_{\max}\approx 0.661$, giving a safety factor $e_{\max}/e_{\mathrm{MSRE}}\approx 3.23$ (\Cref{fig:emax_threshold}). These numerics are application-specific and do not enter the proof; they confirm that \Cref{lem:stability} is non-vacuous for the MSRE scenario.
\end{remark}

\begin{figure}[htbp]
  \centering
  \includegraphics[width=0.7\textwidth]{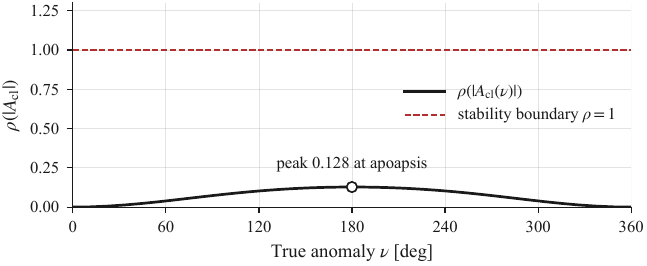}
  \caption{Spectral radius $\rho(|A_{\mathrm{cl},k}|)$ versus true anomaly $\nu$ on the MSRE orbit. The value remains below unity at all orbit phases, with a peak of 0.128 near periapsis where the orbital dynamics vary most rapidly.}
  \label{fig:spectral_radius_orbit}
\end{figure}

\begin{figure}[htbp]
  \centering
  \includegraphics[width=0.7\textwidth]{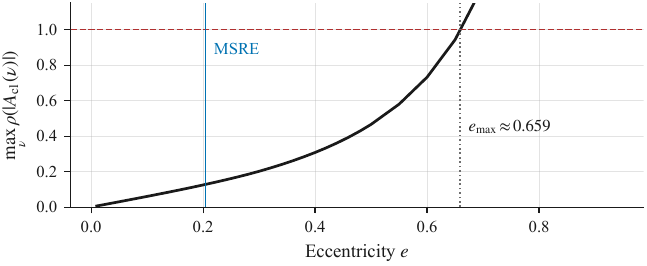}
  \caption{Maximum spectral radius $\max_\nu\rho(|A_{\mathrm{cl}}(\nu)|)$ as a function of eccentricity. The dashed red line marks the stability boundary at~1. The MSRE orbit ($e=0.204$) lies well within the feasible region.}
  \label{fig:emax_threshold}
\end{figure}

\begin{remark}\label{rem:pathology}
The condition $\rho(|A_{\mathrm{cl},k}|)<1$ is strictly stronger than Schur stability. A Schur-stable matrix can have $\rho(|M|)>1$ if it contains large entries with cancelling signs; for instance, a $2\times2$ rotation near $\pi/2$ has spectral radius less than one but element-wise absolute value with spectral radius exceeding one. The LQR design with dominant position weighting avoids this pathology by producing gains that make $A_{\mathrm{cl}}$ element-wise small rather than relying on sign cancellations.
\end{remark}

\Cref{prop:ebound} states soundness: $\emax_j$ upper-bounds the realised tracking error $|\evec_j|$ at every horizon step for every admissible disturbance sequence.

\begin{proposition}[Error Bound Validity]\label{prop:ebound}
Under \Cref{ass:bounded_w}, if $|\evec_0|\leq\emax_0$ element-wise and $|\wvec_j|\leq\wmax$ for all $j$, then
\begin{equation}\label{eq:ebound}
  |\evec_j|\leq\emax_j,\qquad j=0,1,\ldots,N,
\end{equation}
where $\emax_j$ satisfies the recursion~\eqref{eq:emax_prop}.
\end{proposition}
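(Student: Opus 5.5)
The plan is induction on the horizon index $j$, using the error dynamics~\eqref{eq:error_dyn} along the prediction window together with \Cref{lem:abs}. Indexing the error along the horizon as $\evec_j$, with $\evec_{j+1}=A_{\mathrm{cl},k+j}\evec_j+\wvec_{k+j}+\boldsymbol{\delta}_j$, the term $\boldsymbol{\delta}_j=\delta_m B_{k+j}\uvec_{k+j}$ is the multiplicative thrust-mismatch contribution of \Cref{sec:mass_aware}. When $\delta_{m,\max}=0$ this term vanishes and the recursion reduces to~\eqref{eq:emax_prop} with $\wmax_j=\wmax$. The base case $j=0$ is the hypothesis $|\evec_0|\le\emax_0$. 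Since $\emax_0=\mathbf{0}$, this is consistent with the initialisation $\zvec_0=\xvec_k$, which gives $\evec_0=\mathbf{0}$.

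For the inductive step, assume $|\evec_j|\le\emax_j$. Apply the scalar triangle inequality element-wise to the sum, giving $|\evec_{j+1}|\le|A_{\mathrm{cl},k+j}\evec_j|+|\wvec_{k+j}|+|\boldsymbol{\delta}_j|$. Then bound each of the three terms.
\begin{itemize}
\item The first term satisfies $|A_{\mathrm{cl},k+j}\evec_j|\le|A_{\mathrm{cl},k+j}|\,\emax_j$ by \Cref{lem:abs} with $M=A_{\mathrm{cl},k+j}$ and $\bar v=\emax_j$; the lemma's monotonicity of $v\mapsto|M|v$ is exactly what carries the inductive hypothesis forward.
\item The second term satisfies $|\wvec_{k+j}|\le\wmax$ by \Cref{ass:bounded_w}.
\item The third term satisfies $|\delta_m B_{k+j}\uvec_{k+j}|\le\delta_{m,\max}|B_{k+j}|\,|\uvec_{k+j}|\le\delta_{m,\max}|B_{k+j}|u_{\max}$, again by \Cref{lem:abs}, using $|\uvec|\le u_{\max}$ from the symmetric actuator bound~\eqref{eq:uconstraint}.
\end{itemize}
Summing the three bounds gives $|\evec_{j+1}|\le|A_{\mathrm{cl},k+j}|\emax_j+\wmax_j=\emax_{j+1}$ by~\eqref{eq:wmax_total} and~\eqref{eq:emax_prop}. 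This closes the induction up to $j=N$.

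The main obstacle is justifying $|\uvec_{k+j}|\le u_{\max}$ for the applied input, which is needed to bound the mass term. The applied input is $\vvec_j+K\evec_j$, and it respects the actuator box only because of the input tightening of \Cref{rem:Klem}. That remark bounds $|K\evec_j|\le|K|\emax_j$ using the inductive hypothesis at step $j$ itself, so the argument is not circular: at step $j$ we already know $|\evec_j|\le\emax_j$, and hence $u_{\min}\le\uvec_{k+j}\le u_{\max}$ whenever $\vvec_j$ satisfies the tightened constraint. If one prefers not to invoke the QP constraints, one can instead assume the actuator saturates at $u_{\max}$. Either way, I would state this dependence explicitly as the point where the mass-aware bound $\wmax_j$ enters.
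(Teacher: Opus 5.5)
Your induction is correct but proves a stronger statement than the paper's, using a different error model.

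\textbf{The paper's route.} It works with the purely additive error dynamics~\eqref{eq:error_dyn}, bounds $|\wvec_j|\le\wmax$, and then uses $\wmax\le\wmax_j$. In its proof the mass term $\delta_{m,\max}|B_{k+j}|u_{\max}$ inside $\wmax_j$ is pure slack, so nothing beyond the stated hypotheses is needed. For the proposition as literally posed, with~\eqref{eq:error_dyn}, your third bullet is identically zero even when $\delta_{m,\max}>0$, and your argument reduces to the paper's. It is the absence of the mass term in~\eqref{eq:error_dyn} that kills that bullet, not $\delta_{m,\max}=0$.

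\textbf{What your route buys and what it costs.} You put $\delta_m B_{k+j}\uvec_{k+j}$ into the error recursion. That is what is needed if the bound is to cover the Monte Carlo plant $B_k^{\mathrm{truth}}=(1+\delta_m)B_k$. It also justifies the form of~\eqref{eq:wmax_total}, which the paper only asserts in \Cref{sec:mass_aware} without showing $|\uvec_k|\le u_{\max}$. The price is two hypotheses absent from the proposition:
\begin{itemize}
\item $|\delta_m|\le\delta_{m,\max}$;
\item $\vvec_j$ satisfies the tightened input constraint~\eqref{eq:tubeMPC_ucon}, i.e.\ QP feasibility, as assumed in \Cref{thm:robust}.
\end{itemize}
Your remark that the inductive hypothesis at step $j$ already gives $|K\evec_j|\le|K|\emax_j$, so the argument is not circular, is correct.

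\textbf{Drop the saturation alternative.} If $\uvec_{k+j}=\mathrm{sat}(\vvec_j+K\evec_j)$ actually saturates, then $\uvec_{k+j}-\vvec_j\ne K\evec_j$. The error then no longer propagates through $A_{\mathrm{cl},k+j}$. The residual $B_{k+j}\bigl(\mathrm{sat}(\vvec_j+K\evec_j)-\vvec_j-K\evec_j\bigr)$ is not covered by $\wmax_j$. Saturation is harmless only when it is never active, which brings you back to the tightened-constraint route.
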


\begin{proof}
By induction on $j$. The base case $|\evec_0|=|\xvec_k-\zvec_0|=\mathbf{0}=\emax_0$ is immediate since $\zvec_0=\xvec_k$. For the inductive step, \Cref{lem:abs} applied to $M=A_{\mathrm{cl},k+j}$ and $v=\evec_j$ together with the inductive hypothesis $|\evec_j|\le\emax_j$ gives
\begin{equation}\label{eq:abs_ineq}
  |\evec_{j+1}| = |A_{\mathrm{cl},k+j}\evec_j+\wvec_j| \leq |A_{\mathrm{cl},k+j}|\,|\evec_j|+|\wvec_j| \leq |A_{\mathrm{cl},k+j}|\,\emax_j+\wmax_j = \emax_{j+1},
\end{equation}
using \Cref{ass:bounded_w} for $|\wvec_j|\le\wmax\le\wmax_j$ (since $\wmax_j=\wmax+\delta_{m,\max}|B_{k+j}|u_{\max}\ge\wmax$).
\end{proof}

\Cref{prop:convergence} removes the horizon-length dependence: when the orbit-worst-case matrix is Perron--Frobenius stable the recursion has a finite steady state $\emax_\infty$, the ``cost of robustness'' that quantifies the corridor width consumed by the tube. Its hypothesis $\rho(\bar A)<1$ is a condition on the element-wise maximum $\bar A$, which is strictly stronger than $\rho(|A_{\mathrm{cl},k}|)<1$ holding at each $k$ separately, and must be verified in its own right; for the MSRE design $\rho(\bar A)=0.129$ against $\max_k\rho(|A_{\mathrm{cl},k}|)=0.128$ (\Cref{rem:spec_numerical}).

\begin{proposition}[Convergence of Error Bound]\label{prop:convergence}
If the orbit-worst-case matrix $\bar A=\max_k|A_{\mathrm{cl},k}|$ satisfies $\rho(\bar A)<1$, then the error bound sequence $\{\emax_j\}_{j\geq0}$ converges to a finite limit satisfying
\begin{equation}\label{eq:ess}
  \emax_\infty \leq (I-\bar{A})^{-1}\wmax_\infty,\qquad \wmax_\infty:=\wmax+\delta_{m,\max}\max_k|B_k|\,u_{\max},
\end{equation}
where $\bar{A}=\max_k|A_{\mathrm{cl},k}|$ is the element-wise maximum over one orbital period.
\end{proposition}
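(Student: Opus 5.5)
The plan is to avoid analysing the time-varying recursion~\eqref{eq:emax_prop} directly. Instead I would dominate it by a time-invariant comparison recursion built from the orbit-worst-case data, and then use the Perron--Frobenius hypothesis $\rho(\bar A)<1$ to control that comparison sequence. Concretely, define $\bar f_0=\mathbf{0}$ and $\bar f_{j+1}=\bar A\,\bar f_j+\wmax_\infty$.

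The first step is to show by induction that $\mathbf{0}\le\emax_j\le\bar f_j$ for all $j$. Non-negativity is immediate, because $|A_{\mathrm{cl},k+j}|\ge 0$ and $\wmax_j\ge0$. For the upper bound, the definitions of $\bar A$ and $\wmax_\infty$ give element-wise $|A_{\mathrm{cl},k+j}|\le\bar A$ and $\wmax_j\le\wmax_\infty$, the latter since $|B_{k+j}|\le\max_k|B_k|$. The monotonicity argument of \Cref{lem:abs} (non-negative matrices preserve the element-wise order on non-negative vectors) then yields
\begin{equation*}
\emax_{j+1}=|A_{\mathrm{cl},k+j}|\,\emax_j+\wmax_j\le\bar A\,\emax_j+\wmax_\infty\le\bar A\,\bar f_j+\wmax_\infty=\bar f_{j+1}.
\end{equation*}
One point needs care: $\bar A$ must dominate $|A_{\mathrm{cl},k}|$ for every integer $k$, not only on one sampled orbit. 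I would handle this by periodicity of $A_k,B_k$ in $\nu$, taking the maximum over the continuum of phases; the 200-point grid of \Cref{rem:spec_numerical} is only a numerical surrogate.

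The second step analyses the comparison sequence. Unrolling gives $\bar f_j=\sum_{i=0}^{j-1}\bar A^{\,i}\,\wmax_\infty$. This is element-wise non-decreasing in $j$, since every added term is non-negative. Because $\rho(\bar A)<1$, the Neumann series $\sum_{i\ge0}\bar A^{\,i}$ converges to $(I-\bar A)^{-1}$, which is non-negative. Hence $\bar f_j\uparrow(I-\bar A)^{-1}\wmax_\infty$ monotonically, so the sequence $\{\emax_j\}$ is uniformly bounded by the right-hand side of~\eqref{eq:ess} for every $j$.

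The main obstacle is the word ``converges''. In the LTI case $\emax_j$ is itself monotone (from $\emax_1\ge\emax_0=\mathbf 0$, by induction with a fixed non-negative matrix) and bounded, so it converges, and its limit is the fixed point, which lies below $(I-\bar A)^{-1}\wmax_\infty$. In the genuinely time-varying YA case, however, $\emax_{j+1}-\emax_j$ need not be sign-definite, and the sequence may oscillate with the orbit phase.

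I would first try to exploit periodicity of period $T$ (an integer number of samples per orbit, or approximately so). Sampling the sequence every $T$ steps gives an affine map with non-negative matrix $\Pi=\prod|A_{\mathrm{cl}}|$, where $\rho(\Pi)\le\rho(\bar A^{T})<1$. Along each residue class modulo $T$ this subsequence is monotone and convergent. The full sequence therefore converges to a $T$-periodic orbit, and each of its points obeys~\eqref{eq:ess}.

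If that route is judged too heavy, I would state the conclusion in the weaker but sufficient form
\begin{equation*}
\limsup_{j\to\infty}\emax_j\le(I-\bar A)^{-1}\wmax_\infty.
\end{equation*}
This form follows from the comparison step alone, and it is all that the tightening needs.
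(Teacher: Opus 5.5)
Your comparison step is the paper's proof essentially verbatim. The paper majorises~\eqref{eq:emax_prop} by $\hat{\evec}_{j+1}=\bar A\hat{\evec}_j+\wmax_\infty$ with $\hat{\evec}_0=\mathbf 0$, obtains $\emax_j\le\hat{\evec}_j$ from the induction step of \Cref{prop:ebound}, sums the Neumann series, and ends with ``taking $j\to\infty$ yields~\eqref{eq:ess}.'' Your remark that $\bar A$ must dominate every phase, not just the 200-point grid, sharpens the paper's ``by construction.'' The paper's final sentence presupposes that $\lim_j\emax_j$ exists, and nothing in the argument supplies it. Your hesitation is therefore justified: the single-limit claim is false in general. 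Take the scalar case with $|A_{\mathrm{cl},k+j}|$ alternating $0,\tfrac12,0,\tfrac12,\dots$ and $\wmax_j\equiv1$. Then $\rho(\bar A)=\tfrac12$, yet $\emax_j=1,\tfrac32,1,\tfrac32,\dots$ for $j\ge1$. For the YA data, a limit $\ell$ would force $(I-|A_{\mathrm{cl},k+j}|)\,\ell-\wmax_j\to\mathbf 0$, which is a non-generic identity at every phase the orbit keeps revisiting. The mass term $\delta_{m,\max}|B_{k+j}|u_{\max}$ alone already makes $\wmax_j$ phase-dependent. What the comparison argument (yours and the paper's) actually proves is the uniform bound $\emax_j\le(I-\bar A)^{-1}\wmax_\infty$ for all $j$, hence your $\limsup$ form. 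That is all \Cref{cor:iss} and the constant-width baseline~\eqref{eq:emax_const} use, and \Cref{thm:robust} needs only the finite-horizon \Cref{prop:ebound}.

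Your periodic route is sound in outline but needs two repairs.

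First, it requires exact commensurability, $F_{j+T}=F_j$ for the one-step maps $F_j(x)=|A_{\mathrm{cl},k+j}|\,x+\wmax_j$. On MSRE $T_{\mathrm{orb}}/T_s\approx48.03$, so ``approximately so'' does not suffice. For incommensurate sampling the right statement is $\emax_j-g(\nu_{k+j})\to\mathbf 0$, where $g\le(I-\bar A)^{-1}\wmax_\infty$ is a phase-dependent profile built from the backward (pullback) sum. This holds because any two solutions of the recursion satisfy $|x_j-y_j|\le\bar A^{\,j}|x_0-y_0|$.

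Second, monotonicity along residue classes $r\neq0$ does not follow from a single $\Pi$-recursion. Each class has its own cyclic product and starts from $\emax_r\neq\mathbf 0$. Instead, write $\emax_{r+mT}=\Psi_r\bigl(\Psi_T^{\,m}(\mathbf 0)\bigr)$ with $\Psi_r=F_{r-1}\circ\cdots\circ F_0$. Since $\Psi_T(\mathbf 0)\ge\mathbf 0$ and every $F_j$ is order-preserving, each class is non-decreasing in $m$ and bounded, hence convergent.

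Even then the limit is a $T$-periodic profile rather than a single vector. The version of the proposition that actually holds is the bounded/$\limsup$ one you fall back on.
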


\begin{proof}
$|A_{\mathrm{cl},k+j}|\le\bar A$ and $\wmax_j\le\wmax_\infty$ hold element-wise by construction. Consider the majorising recursion $\hat{\evec}_{j+1}=\bar{A}\hat{\evec}_j+\wmax_\infty$ from $\hat{\evec}_0=\mathbf{0}$. Its closed form is $\hat{\evec}_j=\sum_{i=0}^{j-1}\bar{A}^i\,\wmax_\infty$; since $\rho(\bar A)<1$, the Neumann series converges to $(I-\bar A)^{-1}\wmax_\infty<\infty$. The bound $\emax_j\le\hat{\evec}_j$ follows by induction using the same step as in \Cref{prop:ebound}, with $\bar A$ replacing $|A_{\mathrm{cl},k+j}|$ and $\wmax_\infty$ replacing $\wmax_j$. Taking $j\to\infty$ yields~\eqref{eq:ess}.
\end{proof}

\begin{remark}
The horizon-dependent bound is less conservative than the constant-width mRPI tube because (i)~$\emax_0=\mathbf{0}$ leaves the first prediction step untightened, and (ii)~the actual matrices $A_{\mathrm{cl},k+j}$ track orbit-phase-dependent error growth rather than bounding it uniformly with $\bar{A}$. \Cref{fig:tube_tightening} illustrates the tightening profile.
\end{remark}

\begin{figure}[htbp]
  \centering
  \includegraphics[width=\textwidth]{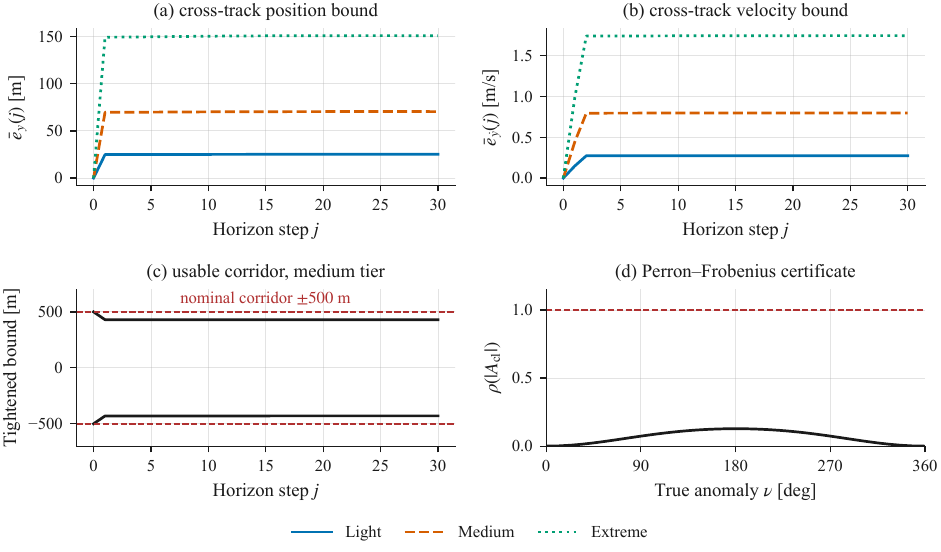}
  \caption{\textbf{Horizon-dependent constraint tightening on the MSRE orbit.}
  Light (solid), medium (dashed), extreme (dotted) tiers.
  (a)~Cross-track position bound $\bar e_{\mathrm{max},y}(j)$ grows from zero
  and saturates in $\sim 5$ steps; saturation level scales with the tier.
  (b)~Cross-track velocity bound, same staircase pattern.
  (c)~Tightened corridor (shaded) vs original $\pm 500$~m (thin lines): early
  steps retain almost the full corridor; late steps lose $\sim 14$\%
  (medium) to $\sim 30$\% (extreme) of half-width.
  (d)~Spectral radius $\rho(|A_{\mathrm{cl},k}|)$ over the orbit, below unity
  (dashed red) throughout, the Perron--Frobenius certificate of
  \Cref{lem:stability}.}
  \label{fig:tube_tightening}
\end{figure}

\subsection{Tightened-Constraint QP}

The tube MPC QP replaces the original constraints in~\eqref{eq:nomMPC} with their tightened counterparts:
\begin{subequations}\label{eq:tubeMPC}
\begin{align}
  \min_{\zvec_{0:N},\,\vvec_{0:N-1}} \quad & \sum_{j=0}^{N-1}\!\Big[(\zvec_j-\xvec_s)^\top Q\,(\zvec_j-\xvec_s) + \vvec_j^\top R\,\vvec_j\Big] + (\zvec_N-\xvec_s)^\top P\,(\zvec_N-\xvec_s) \label{eq:tubeMPC_cost}\\
  \text{s.t.}\quad & \zvec_0 = \xvec_k, \label{eq:tubeMPC_init}\\
  & \zvec_{j+1}=A_{k+j}\,\zvec_j + B_{k+j}\,\vvec_j,\quad j=0,\ldots,N{-}1, \label{eq:tubeMPC_dyn}\\
  & x_{\min}+\emax_j\leq\zvec_j\leq x_{\max}-\emax_j,\quad j=1,\ldots,N, \label{eq:tubeMPC_xcon}\\
  & u_{\min}+|K|\,\emax_j\leq\vvec_j\leq u_{\max}-|K|\,\emax_j,\quad j=0,\ldots,N{-}1. \label{eq:tubeMPC_ucon}
\end{align}
\end{subequations}

The state tightening~\eqref{eq:tubeMPC_xcon} shrinks the position and velocity corridors by $\emax_j$ on each side, ensuring that even in the worst case the true state $\xvec_j=\zvec_j+\evec_j$ remains inside $[x_{\min},x_{\max}]$. The input tightening~\eqref{eq:tubeMPC_ucon} accounts for the feedback correction $K\evec_j$, whose worst-case magnitude is $|K|\emax_j$ entry-wise.

For the tightened sets to be non-empty, one requires
\begin{equation}\label{eq:feasibility_check}
  x_{\min}+\emax_j < x_{\max}-\emax_j\quad\text{and}\quad u_{\min}+|K|\emax_j < u_{\max}-|K|\emax_j
\end{equation}
for all $j=0,\ldots,N$. If any tightened set is empty, the disturbance bound $\wmax$ exceeds what the system can robustly tolerate given the constraint margins, and the controller falls back to LQR.

Because $\emax_0=\mathbf{0}$, the tightened constraints at $j=0$ coincide with the original constraints. The tightening grows monotonically with $j$, so the constraint set is narrowest at the end of the horizon. The applied control is simply $\uvec_k=\vvec_0^\star$: since $\zvec_0=\xvec_k$, the tracking error is zero at $j=0$, and the feedback correction $K\evec_0=\mathbf{0}$ vanishes.

\subsection{Robustness Guarantees}

Three guarantees build on \Cref{prop:ebound}: robust constraint satisfaction, recursive feasibility, and an input-to-state-stability bound. Proofs longer than four lines are deferred to \Cref{app:terminal_set}.

\begin{theorem}[Robust Constraint Satisfaction]\label{thm:robust}
Under \Cref{ass:dynamics,ass:bounded_w,ass:controllable}, if the tube MPC QP~\eqref{eq:tubeMPC} is feasible at time $k$, then for any disturbance realisation satisfying $|\wvec_j|\leq\wmax$, the true closed-loop state $\xvec_{k+j}$ and input $\uvec_{k+j}$ satisfy the original constraints~\eqref{eq:xconstraint}--\eqref{eq:uconstraint} for $j=0,\ldots,N$.
\end{theorem}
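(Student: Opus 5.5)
The argument works along the open-loop tube anchored at time $k$. Fix the QP solution $(\zvec^\star_{0:N},\vvec^\star_{0:N-1})$ computed at time $k$, and consider the true trajectory generated by the tube law~\eqref{eq:tube_law} over the window. That is, $\uvec_{k+j}=\vvec^\star_j+K(\xvec_{k+j}-\zvec^\star_j)$ for $j=0,\dots,N-1$, with $\xvec_k=\zvec^\star_0$ by~\eqref{eq:tubeMPC_init}. This is the interpretation under which the statement is meaningful over the whole horizon. Under receding-horizon operation only $j=0$ is actually applied before re-solving, and that case is covered a fortiori. Define $\evec_j=\xvec_{k+j}-\zvec^\star_j$. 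Subtracting~\eqref{eq:tubeMPC_dyn} from the perturbed dynamics gives exactly the error dynamics~\eqref{eq:error_dyn}, $\evec_{j+1}=A_{\mathrm{cl},k+j}\evec_j+\tilde\wvec_j$.

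The first step is to identify $\tilde\wvec_j$ and bound it. For pure process noise, $\tilde\wvec_j=\wvec_{k+j}$ and $|\tilde\wvec_j|\le\wmax\le\wmax_j$. If the mass mismatch $(1+\delta_m)B_{k+j}$ is also present, $\tilde\wvec_j=\wvec_{k+j}+\delta_m B_{k+j}\uvec_{k+j}$. Once we know $|\uvec_{k+j}|\le u_{\max}$, \Cref{lem:abs} gives $|\tilde\wvec_j|\le\wmax+\delta_{m,\max}|B_{k+j}|u_{\max}=\wmax_j$. This creates a circularity: the disturbance bound at step $j$ needs the input bound at step $j$, which needs the error bound at step $j$. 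I would resolve it by running a single induction on $j$ that carries $|\evec_j|\le\emax_j$ and derives the input bound at step $j$ before propagating to $j+1$. This ordering is the only delicate point of the proof.

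The induction proceeds as follows. At the base, $\evec_0=\mathbf 0=\emax_0$. At step $j$, assume $|\evec_j|\le\emax_j$.
\begin{enumerate}
\item[(a)] \textbf{State.} $\xvec_{k+j}=\zvec^\star_j+\evec_j$. For $j\ge1$, \eqref{eq:tubeMPC_xcon} gives $x_{\min}\le\zvec^\star_j-\emax_j\le\xvec_{k+j}\le\zvec^\star_j+\emax_j\le x_{\max}$. For $j=0$, $\xvec_k$ is the measured state itself, so state feasibility at $j=0$ must be taken as a standing hypothesis; it is inherited from the previous step of the argument, or assumed for the initial condition.
\item[(b)] \textbf{Input.} By \Cref{rem:Klem}, $|K\evec_j|\le|K|\emax_j$. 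Combined with~\eqref{eq:tubeMPC_ucon}, this gives $u_{\min}\le\uvec_{k+j}\le u_{\max}$ for $j\le N-1$.
\item[(c)] \textbf{Propagation.} With the input bound in hand, $|\tilde\wvec_j|\le\wmax_j$. The inductive step of \Cref{prop:ebound}, via \Cref{lem:abs}, then yields $|\evec_{j+1}|\le\emax_{j+1}$.
\end{enumerate}
At $j=N$ only the state claim is needed, and it follows from~(a).

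Finally, I would remark that \Cref{ass:dynamics,ass:controllable} are used only to ensure that the model~\eqref{eq:dynamics} is a valid description and that $K$ is well defined. The convergence results (\Cref{prop:convergence}) are not needed, since the horizon is finite. I would also note explicitly that the input claim holds for $j\le N-1$, because no input is defined at $j=N$.
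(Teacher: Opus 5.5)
Your proposal is correct and follows the same route as the paper: decompose $\xvec_{k+j}=\zvec_j+\evec_j$ and $\uvec_{k+j}=\vvec_j+K\evec_j$, use the error bound $|\evec_j|\le\emax_j$ from \Cref{prop:ebound}, and recover the original constraints from the tightened ones. Your extra caveats are valid refinements that the paper's short proof glosses over. First, the QP constrains states only for $j=1,\ldots,N$, so state feasibility at $j=0$ does need a standing hypothesis. Second, your interleaved induction, which obtains the input bound at step $j$ before propagating to $j+1$, is what is actually needed to absorb the mass-mismatch term $\delta_m B_{k+j}\uvec_{k+j}$ into $\wmax_j$; the paper's argument covers only the additive noise $|\wvec_j|\le\wmax$.
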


\begin{proof}
By \Cref{prop:ebound}, $|\evec_j|\le\emax_j$. The state decomposition $\xvec_{k+j}=\zvec_j+\evec_j$ with the tightened state constraint~\eqref{eq:tubeMPC_xcon} gives $\zvec_j+\emax_j\le x_{\max}$ and $\zvec_j-\emax_j\ge x_{\min}$, so $x_{\min}\le\xvec_{k+j}\le x_{\max}$. The same argument applied to $\uvec_{k+j}=\vvec_j+K\evec_j$ with $|K\evec_j|\le|K|\emax_j$ and the tightened input constraint~\eqref{eq:tubeMPC_ucon} delivers $u_{\min}\le\uvec_{k+j}\le u_{\max}$.
\end{proof}

Recursive feasibility (feasibility at $k$ implies feasibility at $k{+}1$ despite unknown disturbances) requires one terminal ingredient: an ellipsoidal robust positively invariant (RPI) terminal set under the static feedback gain $K$, enforced as a single quadratic constraint in the QP. The construction is summarised below; the derivation of the level set $\alpha$ is given in \Cref{app:terminal_set}.

\paragraph{Ellipsoidal terminal set.}
Let $\nu_\star$ denote the worst-case orbit phase that maximises the contraction-loss ratio $\lambda_{\max}(P^{-1/2}A_{\mathrm{cl},k}^\top P A_{\mathrm{cl},k} P^{-1/2})$ over one orbital period (numerically, $\nu_\star$ corresponds to apocenter for the MSRE design), and let $P=\mathrm{DARE}(A_\star,B_\star,Q_K{+}\varepsilon I,R_K)$ be the Riccati solution there. Define
\begin{equation}\label{eq:Xf}
  \mathcal{X}_f \;=\; \{\xvec\in\R^6:\;(\xvec-\xvec_s)^\top P\,(\xvec-\xvec_s)\;\le\;\alpha\},
\end{equation}
with the level $\alpha$ chosen as
\begin{equation}\label{eq:alpha}
  \alpha \;=\; \frac{(1+1/\varepsilon_*)\,\wmax_\infty^\top |P|\,\wmax_\infty}{1-(1+\varepsilon_*)\,\lambda_\star},
  \qquad \lambda_\star := \max_k \lambda_{\max}\!\left(P^{-1/2}A_{\mathrm{cl},k}^\top P A_{\mathrm{cl},k} P^{-1/2}\right),
\end{equation}
where $\varepsilon_*\in(0,(1-\lambda_\star)/\lambda_\star)$ is a free parameter chosen offline to minimise $\alpha$ (\Cref{app:terminal_set}). The inequality $\lambda_\star<1$ is equivalent to $P$ being a common Lyapunov function for the family $\{A_{\mathrm{cl},k}\}_k$ on the orbit; \Cref{lem:terminal} verifies this is satisfied under \Cref{lem:stability}.

\begin{lemma}[Robust Forward Invariance of $\mathcal{X}_f$]\label{lem:terminal}
Under \Cref{ass:dynamics,ass:bounded_w,ass:controllable} and the conditions of \Cref{lem:stability}, the set $\mathcal{X}_f$ defined by~\eqref{eq:Xf}--\eqref{eq:alpha} is robustly positively invariant under the closed-loop dynamics $\xvec_{k+1}=A_{\mathrm{cl},k}(\xvec_k-\xvec_s)+\xvec_s+\wvec_k$: for every $k$, every $\xvec_k\in\mathcal{X}_f$, and every $\wvec_k$ with $|\wvec_k|\le\wmax_\infty$, $\xvec_{k+1}\in\mathcal{X}_f$.
\end{lemma}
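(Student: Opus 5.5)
Set $\tilde\xvec_k=\xvec_k-\xvec_s$ and write the invariance requirement in terms of $V(\tilde\xvec)=\tilde\xvec^\top P\tilde\xvec$. The claim is that $V(\tilde\xvec_k)\le\alpha$ implies $V(A_{\mathrm{cl},k}\tilde\xvec_k+\wvec_k)\le\alpha$ for every $|\wvec_k|\le\wmax_\infty$ and every $k$. The plan follows a one-step Lyapunov decrease-plus-disturbance argument. I will split the quadratic form of the sum using a Young-type inequality with weight $\varepsilon_*>0$:
\begin{equation*}
(a+b)^\top P(a+b)\le(1+\varepsilon_*)\,a^\top Pa+(1+1/\varepsilon_*)\,b^\top Pb .
\end{equation*}
This follows from $2a^\top Pb\le\varepsilon_* a^\top Pa+\varepsilon_*^{-1}b^\top Pb$, which is Cauchy--Schwarz in the $P$-inner product, valid since $P\succ0$ as the stabilising DARE solution with $Q_K+\varepsilon I\succ0$. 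I apply it with $a=A_{\mathrm{cl},k}\tilde\xvec_k$ and $b=\wvec_k$.

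The two terms are then bounded separately. For the nominal term, the definition of $\lambda_\star$ as the largest generalized eigenvalue gives
\begin{equation*}
\tilde\xvec^\top A_{\mathrm{cl},k}^\top PA_{\mathrm{cl},k}\tilde\xvec\le\lambda_\star\,\tilde\xvec^\top P\tilde\xvec\quad\text{for every }k,
\end{equation*}
obtained by substituting $y=P^{1/2}\tilde\xvec$ and taking the Rayleigh quotient. Hence $a^\top Pa\le\lambda_\star\alpha$. For the disturbance term, write $\wvec^\top P\wvec=\sum_{il}P_{il}w_iw_l\le\sum_{il}|P_{il}||w_i||w_l|$. Since $|P|$ is entrywise nonnegative and $|\wvec|\le\wmax_\infty$ elementwise, monotonicity as in \Cref{lem:abs} gives $\wvec^\top P\wvec\le\wmax_\infty^\top|P|\wmax_\infty$. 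Combining the two bounds,
\begin{equation*}
V(\tilde\xvec_{k+1})\le(1+\varepsilon_*)\lambda_\star\alpha+(1+1/\varepsilon_*)\,\wmax_\infty^\top|P|\wmax_\infty .
\end{equation*}
The right-hand side is at most $\alpha$ exactly when $\alpha\bigl(1-(1+\varepsilon_*)\lambda_\star\bigr)\ge(1+1/\varepsilon_*)\wmax_\infty^\top|P|\wmax_\infty$. With $\varepsilon_*<(1-\lambda_\star)/\lambda_\star$ the bracket is positive, and this holds with equality for the choice \eqref{eq:alpha}. This closes the argument.

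The main obstacle is justifying $\lambda_\star<1$, i.e.\ that the single $P$ computed at $\nu_\star$ is a common Lyapunov function for the whole family $\{A_{\mathrm{cl},k}\}$. Without it the admissible range for $\varepsilon_*$ is empty and $\alpha$ is undefined or negative. \Cref{lem:stability} only delivers $\rho(|A_{\mathrm{cl},k}|)<1$ pointwise, and this does not imply a common quadratic Lyapunov function. My first attempt would be to use $\rho(\bar A)<1$: the Perron vector of $\bar A$ gives a common weighted $\infty$-norm contraction, but converting that into a $P$-norm contraction loses a condition-number factor. So I expect to treat $\lambda_\star<1$ as a verified offline design condition at the MSRE point, in the same spirit as part (ii) of \Cref{lem:stability}, and to state explicitly that the lemma holds whenever this computable inequality is satisfied. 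I would also note that $\wmax_\infty\ge\wmax_j\ge$ the actual per-step disturbance, so the disturbance class in the lemma covers the mass-aware bound of \Cref{sec:mass_aware}.
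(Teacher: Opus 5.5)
Your proposal is correct and follows essentially the same argument as the paper's appendix proof: Young's inequality in the $P$-inner product with weight $\varepsilon_*$, the Rayleigh-quotient bound via $\lambda_\star$, the box bound $\wvec^\top P\wvec\le\wmax_\infty^\top|P|\wmax_\infty$, and rearrangement to the level in \eqref{eq:alpha}. You are right to treat $\lambda_\star<1$ as an explicitly verified offline design condition; the paper's continuity/arg-max justification does not establish it on its own either, and in the end it too rests on the numerical check.
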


The proof, the offline choice of $\varepsilon_*$, and the level-set arithmetic are deferred to \Cref{app:terminal_set}.

\paragraph{Terminal constraint and augmented QP.}
The tube MPC QP~\eqref{eq:tubeMPC} is augmented with the single quadratic constraint
\begin{equation}\label{eq:Xf_QP}
  (\zvec_N-\xvec_s)^\top P\,(\zvec_N-\xvec_s)\;\le\;\alpha_N,\qquad \alpha_N:=\Big(\sqrt{\alpha}-\big\|\,|P^{1/2}|\,\emax_N\big\|_2\Big)^2,
\end{equation}
where the deflated level $\alpha_N$ ensures that the predicted nominal terminal state $\zvec_N$ lies in $\mathcal{X}_f$ \emph{after} accounting for the worst-case error tube radius $\emax_N$ in the $P$-metric. The element-wise absolute value $|P^{1/2}|$ is required: $|\evec|\le\emax_N$ bounds the error component-wise, and $\|P^{1/2}\evec\|_2\le\|\,|P^{1/2}|\,\emax_N\|_2$ holds for every such $\evec$, whereas $\|P^{1/2}\emax_N\|_2$ does not upper-bound the box when $P^{1/2}$ carries negative entries. For the MSRE design the two differ by less than $10^{-5}$ relative, since $P$ is dominated by $Q_K$. \Cref{eq:Xf_QP} is a second-order cone constraint. OSQP solves quadratic programs with linear constraints only, so an implementation on that solver must replace it with a polyhedral surrogate; the accompanying code uses the axis-aligned box $|(\zvec_N-\xvec_s)_i|\le\sqrt{\alpha_N/P_{ii}}$, which \emph{circumscribes} the ellipsoid and is therefore a relaxation, not an enforcement, of~\eqref{eq:Xf_QP}. An inscribed polytope, or a solver with native SOC support, is required for \Cref{thm:recursive} to transfer to the implementation. The Monte Carlo campaign of \Cref{sec:results} was run without the terminal constraint; see \Cref{rem:terminal_scope}.

With~\eqref{eq:Xf_QP} in place, recursive feasibility becomes a theorem rather than a sufficient-condition template.

\begin{theorem}[Recursive Feasibility]\label{thm:recursive}
Under \Cref{ass:dynamics,ass:bounded_w,ass:controllable} and \Cref{lem:stability,lem:terminal}, suppose the augmented tube MPC QP~\eqref{eq:tubeMPC}+\eqref{eq:Xf_QP} is feasible at time $k$. For every disturbance realisation with $|\wvec_k|\le\wmax$, the augmented QP at time $k{+}1$ is feasible.
\end{theorem}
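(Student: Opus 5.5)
The plan is the standard shifted-candidate argument for tube MPC. Let $(\zvec^\star_{0:N},\vvec^\star_{0:N-1})$ be the optimal solution at time $k$. The applied input is $\uvec_k=\vvec_0^\star$, and the realised state is $\xvec_{k+1}=A_k\xvec_k+B_k\vvec^\star_0+\wvec_k=\zvec^\star_1+\wvec_k$. Hence the one-step error $\evec_1=\wvec_k$ satisfies $|\evec_1|\le\wmax\le\emax_1$; this is \Cref{prop:ebound} at $j=1$. At time $k{+}1$ the QP re-initialises $\zvec_0=\xvec_{k+1}$. For the candidate I will not simply shift the old plan. Instead I build the new nominal trajectory as the old plan plus the propagated one-step error under the tube feedback:
\[
\tilde\zvec_j=\zvec^\star_{j+1}+\Phi_{\mathrm{cl}}(j)\,\wvec_k,\qquad \tilde\vvec_j=\vvec^\star_{j+1}+K\,\Phi_{\mathrm{cl}}(j)\,\wvec_k,
\]
for $j=0,\dots,N-2$, where $\Phi_{\mathrm{cl}}(j)=A_{\mathrm{cl},k+j}\cdots A_{\mathrm{cl},k+1}$. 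The tail is closed with $\tilde\vvec_{N-1}=K(\tilde\zvec_{N-1}-\xvec_s)$, with the input-offset at the setpoint absorbed if $\xvec_s$ is an equilibrium. By construction the candidate satisfies the initial condition and the nominal dynamics~\eqref{eq:tubeMPC_dyn} at time $k{+}1$.

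The key step is to verify the tightened constraints~\eqref{eq:tubeMPC_xcon}--\eqref{eq:tubeMPC_ucon}. Note that the tightening at time $k{+}1$ uses $\emax^{(k+1)}_j$, which is propagated from the shifted matrices $A_{\mathrm{cl},k+1+i}$. I will prove the nesting inequality
\[
\emax^{(k+1)}_j+|\Phi_{\mathrm{cl}}(j)|\,\wmax\;\le\;\emax^{(k)}_{j+1}.
\]
This follows by induction from the recursion~\eqref{eq:emax_prop}: write $\emax^{(k)}_{j+1}=\sum_{i}|\cdot|\,\wmax_i$ as a sum over products of absolute closed-loop matrices, then use $|\prod M|\le\prod|M|$ (\Cref{lem:abs}) together with $\wmax\le\wmax_0$. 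Combining this with $\zvec^\star_{j+1}$ satisfying the tightened constraint at index $j{+}1$ gives the state constraint for $\tilde\zvec_j$. The input constraint follows the same way with $|K\Phi_{\mathrm{cl}}(j)\wvec_k|\le|K||\Phi_{\mathrm{cl}}(j)|\wmax$.

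The main obstacle is the final index $j=N-1,N$, where the candidate leaves the old horizon. Here I invoke \Cref{lem:terminal}. The old terminal constraint~\eqref{eq:Xf_QP} places $\zvec^\star_N$ in the deflated ellipsoid. The perturbed state $\tilde\zvec_{N-1}=\zvec^\star_N+\Phi_{\mathrm{cl}}(N-1)\wvec_k$ is then inside $\mathcal{X}_f$ after the same deflation bound, using the triangle inequality in the $P$-norm and $\|P^{1/2}e\|_2\le\||P^{1/2}|\,\emax_N\|_2$. Applying the feedback $K$ and invoking \Cref{lem:terminal}, which uses $\lambda_\star<1$, keeps $\tilde\zvec_N$ in $\mathcal{X}_f$. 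To show it lies in the \emph{deflated} level $\alpha_N$, I will argue that the contraction margin $(1-(1+\varepsilon_*)\lambda_\star)\alpha$ absorbs the deflation; this is the step most likely to need an extra condition on $\alpha$.

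I also need $\mathcal{X}_f$, shrunk by $\emax_N$, to satisfy the state and input constraints $x_{\min}+\emax_{N}\le\cdot\le x_{\max}-\emax_N$ and $|K(\cdot-\xvec_s)|$ within the tightened input bounds. I will state this as the standing admissibility requirement on $\alpha$ implicit in \Cref{lem:terminal}, checked via $\sqrt{\alpha/P_{ii}}$ bounds. If the deflation cannot be absorbed, I will fall back to reading \eqref{eq:Xf_QP} with $\alpha_N$ chosen so that $\alpha_N\le\alpha-$(one-step disturbance growth).
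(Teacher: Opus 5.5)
\textbf{Linear constraints: a different route, and a sound one.} Here you depart from the paper's sketch. The paper's candidate is the bare shift $\tilde\vvec_j=\vvec^\star_{j+1}$, justified only by $\tilde\emax_j\le\emax_{j+1}$. From $\tilde\zvec_0=\xvec_{k+1}=\zvec^\star_1+\wvec_k$, however, the bare shift produces the deviation $\tilde\zvec_j-\zvec^\star_{j+1}=A_{k+j}\cdots A_{k+1}\wvec_k$, propagated open loop. The only slack available is $\emax^{(k)}_{j+1}-\emax^{(k+1)}_j=|A_{\mathrm{cl},k+j}|\cdots|A_{\mathrm{cl},k+1}|\,\wmax_0$, and that decays geometrically, so the paper's argument does not close as written.

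Your feedback-corrected candidate turns the deviation into $\Phi_{\mathrm{cl}}(j)\wvec_k$, which is exactly what this slack absorbs. Your nesting inequality is the identity above followed by $|\Phi_{\mathrm{cl}}(j)|\le|A_{\mathrm{cl},k+j}|\cdots|A_{\mathrm{cl},k+1}|$ and $\wmax\le\wmax_0$. The identity holds because the per-step bounds \eqref{eq:wmax_total} are aligned in absolute time ($\wmax^{(k+1)}_i=\wmax^{(k)}_{i+1}$). So the state constraints for $j\le N-1$, the input constraints for $j\le N-2$, and $\tilde\zvec_{N-1}\in\mathcal{X}_f$ are all correctly established.

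\textbf{The gap is the appended step, where your own hedge points.} \Cref{lem:terminal} gives only $\tilde\zvec_N\in\mathcal{X}_f$. The QP at $k{+}1$ instead demands the deflated level built from $\emax^{(k+1)}_N$, at a new orbit phase.

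The appended step is disturbance-free, so the robust margin in \eqref{eq:alpha} is the wrong resource. With $\varepsilon_*=\lambda_\star^{-1/2}-1$ it reads $(1-\sqrt{\lambda_\star})\sqrt\alpha=(\wmax_\infty^\top|P|\wmax_\infty)^{1/2}$, a one-step quantity unrelated to the $N$-step tube.

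What your chain actually yields is as follows. Set $r_k:=\sqrt\alpha-\||P^{1/2}|\,\emax^{(k)}_N\|_2$ and $c:=\||P^{1/2}|\,|\Phi_{\mathrm{cl}}(N-1)|\,\wmax\|_2$. Then $\|P^{1/2}(\tilde\zvec_N-\xvec_s)\|_2\le\sqrt{\lambda_\star}\,(r_k+c)$, so your argument closes precisely when $r_{k+1}\ge\sqrt{\lambda_\star}(r_k+c)$. This bounds how fast the deflated radius may shrink between consecutive phases. It requires at least $r_{k+1}\ge 0$, which feasibility at time $k$ cannot supply because $\emax^{(k+1)}_N$ is computed at phase $\nu_{k+1}$. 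Nothing in \Cref{lem:stability,lem:terminal} implies it. A uniform sufficient version is $\max_k\||P^{1/2}|\,\emax^{(k)}_N\|_2\le(1-\sqrt{\lambda_\star})\sqrt\alpha$, using $c\le\||P^{1/2}|\,\emax^{(k)}_N\|_2$.

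The same applies to the other tail requirements:
\begin{itemize}
\item $K(\tilde\zvec_{N-1}-\xvec_s)$ must lie within the bounds tightened by $|K|\emax^{(k+1)}_{N-1}$;
\item $\tilde\zvec_N$ must lie inside the box tightened by $\emax^{(k+1)}_N$;
\item $\xvec_s$ must be an equilibrium, so that the appended step is $A_{\mathrm{cl},k+N}(\tilde\zvec_{N-1}-\xvec_s)$.
\end{itemize}
\Cref{lem:terminal} is a pure invariance statement and contains none of these. They are extra hypotheses, not implicit ones, and redefining $\alpha_N$ changes the QP in the statement.

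The paper's one-line terminal argument (``follows from $\zvec_N^\star\in\mathcal{X}_f$ together with \Cref{lem:terminal}'') has the same hole. To turn your proposal into a proof, state these conditions explicitly as hypotheses rather than trying to derive them.
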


The candidate is a shift-and-append: $\tilde{\vvec}_j=\vvec_{j+1}^\star$ for $j<N{-}1$, $\tilde{\vvec}_{N-1}=K(\tilde\zvec_{N-1}-\xvec_s)$. Reinitialising $\tilde\emax_0=\mathbf 0$ at $\tilde\zvec_0=\xvec_{k+1}$ inherits linear-constraint feasibility because $\tilde\emax_j\le\emax_{j+1}$; the terminal constraint follows from $\zvec_N^\star\in\mathcal{X}_f$ together with \Cref{lem:terminal}. Full algebra is in \Cref{app:terminal_set}.

\begin{remark}[Scope of \Cref{thm:recursive} in this study]\label{rem:terminal_scope}
\Cref{thm:recursive} is a design result: it holds for the QP that enforces~\eqref{eq:Xf_QP} exactly. The Monte Carlo campaign reported in \Cref{sec:results} was run with the unaugmented tube MPC of~\eqref{eq:tubeMPC}, without the terminal constraint, so none of the empirical results below exercises the terminal set, and the recursive-feasibility guarantee is not what produces the reported fallback rates. Those rates are governed by the linear tightening and by triggers F1 and F3 of \Cref{sec:infeasibility}. Quantifying what the terminal constraint adds in closed loop, using a solver with native second-order-cone support, is left to future work.
\end{remark}

\begin{remark}[Why the worst-case design point?]\label{rem:Pdesign}
Computing $P$ at the apocenter true anomaly $\nu_\star$ (rather than at $\nu_0=0$) ensures that $\lambda_\star<1$ in~\eqref{eq:alpha} is satisfied at every orbit phase, since by construction $\nu_\star$ maximises the contraction-loss ratio. The MSRE evaluation gives $\lambda_\star=0.016$, $\varepsilon_*=6.90$, and $\alpha\approx 2.3\times 10^{7}$ (units consistent with the $Q_K$ weighting) at the medium tier. The resulting $\mathcal{X}_f$ projected onto the $(r_y,r_z)$ plane covers an ellipse of semi-axes $\approx[152,\,152]$~m, comfortably inside the $\pm 500$~m corridor.
\end{remark}

\subsection{Conditions for QP Infeasibility and Recovery}
\label{sec:infeasibility}

Three operational triggers can still cause infeasibility despite \Cref{thm:recursive}:
\begin{enumerate}[label=(F\arabic*),leftmargin=2.5em,nosep]
  \item \textbf{Empty linear tightened sets:} $x_{\min}+\emax_j>x_{\max}-\emax_j$ or $u_{\min}+|K|\emax_j>u_{\max}-|K|\emax_j$ at some $j$; detected before OSQP. Indicates that $\wmax$, $K$, or the corridor is mismatched.
  \item \textbf{Empty terminal level set:} $\alpha_N\le 0$ in~\eqref{eq:Xf_QP}, equivalent to (F1) localised to the terminal constraint.
  \item \textbf{Out-of-bound disturbance:} $\Delta\xvec_0$ or $\wvec_k$ exceeds $\wmax$ (e.g.\ off-nominal $J_2$ residuals), landing the realised state outside the feasible basin.
\end{enumerate}
On any trigger the controller applies $\uvec_k=\mathrm{sat}(-K_{\mathrm{LQR}}(\xvec_k-\xvec_s),u_{\max})$, re-propagates $\emax_j$ from zero, and re-attempts the QP at $k{+}1$. The \Cref{thm:robust} certificate is forfeited for the fallback step; closed-loop stability via \Cref{lem:stability} and the bound of \Cref{prop:ebound} persist. \Cref{tab:fallback} reports $0.7$ fallbacks per trial at medium tier and $9.5$ at extreme.

\Cref{cor:iss} converts the bound of \Cref{prop:convergence} into a standard ISS gain in the $\infty$-norm.

\begin{corollary}[Input-to-State Stability]\label{cor:iss}
Under \Cref{thm:robust,thm:recursive} and the conditions of \Cref{lem:stability}, the closed-loop system~\eqref{eq:dynamics} under the tube MPC law~\eqref{eq:tube_law} is input-to-state stable with respect to the disturbance $\wvec$. The tracking error satisfies
\begin{equation}\label{eq:iss}
  |\evec_j|\leq\bar{A}^j\,|\evec_0| + \sum_{i=0}^{j-1}\bar{A}^i\,\wmax_\infty \;\leq\; (I-\bar{A})^{-1}\wmax_\infty,
\end{equation}
with $\bar{A}=\max_k|A_{\mathrm{cl},k}|$ and $\rho(\bar{A})<1$ as verified in \Cref{rem:spec_numerical}. The asymptotic ISS gain is
\begin{equation}\label{eq:iss_gain}
  \gamma_{\mathrm{ISS}} = \|(I-\bar{A})^{-1}\|_\infty.
\end{equation}
\end{corollary}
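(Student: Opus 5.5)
The plan is to reuse the majorising argument from the proof of \Cref{prop:convergence}, this time started from a general initial error $\evec_0$ instead of zero. Along the closed loop the error obeys~\eqref{eq:error_dyn}, $\evec_{j+1}=A_{\mathrm{cl},k+j}\evec_j+\wvec_j$. I also need a bound on the disturbance seen by the error dynamics when the thrust is affected by mass uncertainty. The multiplicative term $\delta_m B_{k+j}\uvec_{k+j}$ is bounded by $\delta_{m,\max}|B_{k+j}|u_{\max}$, and this uses $|\uvec_{k+j}|\le u_{\max}$, which \Cref{thm:robust} supplies while the QP is feasible. \Cref{thm:recursive} keeps the QP feasible at every subsequent step, and on the saturated fallback the same bound holds by saturation. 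Together with \Cref{ass:bounded_w}, the effective disturbance therefore satisfies $|\wvec_j|\le\wmax_j\le\wmax_\infty$ element-wise for all $j$.

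\textbf{Step 1 (element-wise majorisation).} Apply \Cref{lem:abs} together with $|A_{\mathrm{cl},k+j}|\le\bar A$. This gives $|\evec_{j+1}|\le\bar A|\evec_j|+\wmax_\infty$. The matrix $\bar A$ is non-negative, so multiplication by $\bar A$ preserves the element-wise order. Unrolling by induction then yields the first inequality of~\eqref{eq:iss}, $|\evec_j|\le\bar A^j|\evec_0|+\sum_{i<j}\bar A^i\wmax_\infty$.

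\textbf{Step 2 (ultimate bound).} The condition $\rho(\bar A)<1$ is the hypothesis of \Cref{prop:convergence}, verified in \Cref{rem:spec_numerical}. It gives $\bar A^j\to 0$ and a convergent Neumann series $(I-\bar A)^{-1}=\sum_{i\ge0}\bar A^i\ge0$. Every partial sum is dominated by the full series because all terms are non-negative. For the tube MPC the reinitialisation $\zvec_0=\xvec_k$ gives $\evec_0=\mathbf 0$, so the transient term vanishes and the right-hand inequality of~\eqref{eq:iss} follows. For general $\evec_0$, I will state the right-hand bound asymptotically as $\limsup_j|\evec_j|\le(I-\bar A)^{-1}\wmax_\infty$.

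\textbf{Step 3 (ISS form and gain).} Take $\infty$-norms, using that $\|\cdot\|_\infty$ is monotone on non-negative vectors. This gives $\|\evec_j\|_\infty\le\|\bar A^j\|_\infty\|\evec_0\|_\infty+\|(I-\bar A)^{-1}\|_\infty\|\wmax_\infty\|_\infty$. Because $\rho(\bar A)<1$, for any $r\in(\rho(\bar A),1)$ there is a constant $c$ with $\|\bar A^j\|_\infty\le c\,r^j$ (Gelfand's formula). The first term is therefore a class-$\mathcal{KL}$ function of $(\|\evec_0\|_\infty,j)$, and the second is linear in the disturbance bound with gain~\eqref{eq:iss_gain}. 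Replacing $\wmax_\infty$ by $\sup_i|\wvec_i|$ in the pure process-noise case gives the standard ISS estimate. ISS of the state $\xvec$ then follows because $\xvec=\zvec+\evec$.

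\textbf{Main obstacle.} The delicate point is the transfer from the tracking error to the closed-loop state, and the fact that the mass-induced disturbance depends on $\uvec$. ISS with respect to $\wvec$ of the error system is clean. A statement about $\xvec$ itself, however, needs the nominal trajectory $\zvec$ to converge to $\xvec_s$, which requires the terminal-cost and terminal-set Lyapunov argument built from \Cref{lem:terminal} and \Cref{thm:recursive}. I would first try to confine the corollary's claim to the error dynamics, as~\eqref{eq:iss} is stated. I would then argue that boundedness of $\uvec$ makes $\wmax_\infty$ a legitimate disturbance bound uniformly in time, so that no circularity arises from $\uvec$ depending on $\evec$.
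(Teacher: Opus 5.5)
Your proposal is correct and follows the paper's own argument. The paper iterates the error dynamics~\eqref{eq:error_dyn} with \Cref{lem:abs}, majorises $|A_{\mathrm{cl},k+j}|$ by $\bar A$, sums the Neumann series under $\rho(\bar A)<1$, and takes $\infty$-norms to read off $\gamma_{\mathrm{ISS}}$, exactly as you do. Your extra points are accurate refinements of things the paper's short proof leaves implicit: the right-hand inequality in~\eqref{eq:iss} needs $\evec_0=\mathbf 0$ (supplied by $\zvec_0=\xvec_k$) and otherwise holds only as a $\limsup$; the transient term is $\mathcal{KL}$ by Gelfand's formula; and the argument certifies the error system rather than $\xvec$ itself.
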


\begin{proof}
Iterating the error dynamics~\eqref{eq:error_dyn} and applying \Cref{lem:abs} step by step gives the first inequality in~\eqref{eq:iss}; the second follows from $\rho(\bar A)<1$ and the Neumann series argument of \Cref{prop:convergence}. Taking the $\infty$-norm yields $\|\evec_j\|_\infty\le\gamma_{\mathrm{ISS}}\|\wmax_\infty\|_\infty$.
\end{proof}

\begin{remark}[Numerical values of the ISS bound for the MSRE design]\label{rem:iss_numerical}
For the MSRE design point, $\gamma_{\mathrm{ISS}}\approx 1.16$. With the mass-aware per-step bound~\eqref{eq:wmax_total} at the medium tier, the steady-state error bound evaluates to $\emax_\infty\approx[90.7,\,70.5,\,92.7,\,1.10,\,0.80,\,1.12]^\top$~(m;m/s), tightening the $\pm 500$~m lateral corridor by $14\%$ on the cross-track side and $19\%$ on the radial side. The horizon-dependent recursion reaches a smaller terminal value, $\emax_N\approx[84,\,71,\,86]^\top$~m at $N=30$, because it propagates the actual $|A_{\mathrm{cl},k+j}|$ rather than the orbit-worst-case $\bar A$; the gap between the two is quantified in \Cref{sec:conservatism_compare_mc}.
\end{remark}

\subsection{Summary of Theoretical Guarantees}
\label{sec:cert_guarantees}

\Cref{tab:guarantees} consolidates the formal results of \Cref{sec:tube}.

\begin{table}[htbp]
\centering
\caption{Guarantees provided by the augmented tube MPC of \Cref{sec:tube}. ``Holds'' indicates that the property is established under the listed assumptions.}
\label{tab:guarantees}
\small
\begin{tabular}{@{}l l l >{\raggedright\arraybackslash}p{2.5cm}@{}}
\toprule
Property & Result & Required assumptions & Holds \\
\midrule
Element-wise input bound $|K\evec_j|\le|K|\emax_j$ & \Cref{lem:abs} & none & always \\
Spectral-radius condition $\rho(|A_{\mathrm{cl},k}|)<1$ & \Cref{lem:stability} & \cref{ass:dynamics,ass:controllable}; $e<e_{\max}$ & yes \\
Error-bound validity $|\evec_j|\le\emax_j$ & \Cref{prop:ebound} & \cref{ass:bounded_w}; \Cref{lem:abs} & yes \\
Steady-state convergence $\emax_\infty<\infty$ & \Cref{prop:convergence} & $\rho(\bar A)<1$ & yes \\
Robust constraint satisfaction & \Cref{thm:robust} & \Cref{prop:ebound}; QP feasible & yes \\
Robust forward invariance of $\mathcal{X}_f$ & \Cref{lem:terminal} & \Cref{lem:stability}; $\lambda_\star<1$ & yes \\
Recursive feasibility & \Cref{thm:recursive} & \Cref{lem:terminal}; \eqref{eq:Xf_QP} enforced exactly & by design; not exercised here \\
Input-to-state stability $\gamma_{\mathrm{ISS}}<\infty$ & \Cref{cor:iss} & \Cref{thm:robust,thm:recursive} & yes \\
\bottomrule
\end{tabular}
\end{table}

\paragraph{Limits of the guarantees.} The \Cref{thm:robust} certificate holds only at feasible-QP steps; at fallback steps it is forfeited. The additive disturbance budget $\wmax_\infty$ must cover realised perturbations; structural model errors beyond~\eqref{eq:wmax_total} are out of scope. Timing is measured, not certified.

\subsection{Implementation: Algorithm and Solver}
\label{sec:solver}

\begin{algorithm}[htbp]
\caption{Augmented Tube-Based MPC for YA Relative Motion (with Terminal Constraint)}
\label{alg:tube}
\begin{algorithmic}[1]
\REQUIRE State $\xvec_k$, true anomaly $\nu_k$, parameters $Q,R,P,K,\wmax,\delta_{m,\max},\alpha$
\STATE Build $\{A_{k+j},B_{k+j}\}_{j=0}^{N-1}$ via YA STM at $\nu_k,T_s$
\STATE $A_{\mathrm{cl},j}\gets A_{k+j}+B_{k+j}K$ for $j=0,\ldots,N{-}1$
\STATE $\emax_0\gets\mathbf{0}$
\FOR{$j=0$ \TO $N{-}1$}
  \STATE $\wmax_j\gets\wmax+\delta_{m,\max}\,|B_{k+j}|\,u_{\max}$ \COMMENT{mass-aware bound~\eqref{eq:wmax_total}}
  \STATE $\emax_{j+1}\gets|A_{\mathrm{cl},j}|\,\emax_j+\wmax_j$
\ENDFOR
\STATE Tighten: $x_{\min/\max,j}^{\mathrm{t}}\gets x_{\min/\max}\pm\emax_j$, $u_{\min/\max,j}^{\mathrm{t}}\gets u_{\min/\max}\pm|K|\emax_j$
\STATE \textbf{if} terminal set enabled \textbf{then} $\alpha_N\gets\big(\sqrt\alpha-\big\||P^{1/2}|\,\emax_N\big\|_2\big)^2$ \COMMENT{optional; disabled in \Cref{sec:results}, see \Cref{rem:terminal_scope}}
\IF{any tightened linear set is empty \textbf{or} $\alpha_N\le 0$}\label{alg:f1f2}
  \STATE Fallback: $\uvec_k\gets\mathrm{sat}(-K_{\mathrm{LQR}}(\xvec_k-\xvec_s),\,u_{\max})$ \COMMENT{trigger F1/F2; see \Cref{sec:infeasibility}}
\ELSE
  \STATE Solve augmented QP~\eqref{eq:tubeMPC}+\eqref{eq:Xf_QP} via OSQP $\to\vvec_0^\star$
  \IF{QP infeasible}\label{alg:f3}
    \STATE Fallback: $\uvec_k\gets\mathrm{sat}(-K_{\mathrm{LQR}}(\xvec_k-\xvec_s),\,u_{\max})$ \COMMENT{trigger F3}
  \ELSE
    \STATE $\uvec_k\gets\vvec_0^\star$ \COMMENT{certified by \Cref{thm:robust}}
  \ENDIF
\ENDIF
\RETURN $\uvec_k$
\end{algorithmic}
\end{algorithm}

The QP is solved in condensed form: eliminating states via~\eqref{eq:tubeMPC_dyn} leaves $3N=90$ decision variables and $276$ linear inequalities, all handled natively by OSQP~\cite{OSQP}. Settings: max iter $20{,}000$, tol $10^{-7}$, polish off, warm-start on. Measured solve time on a 3.5~GHz desktop is $29.3$~ms (mean), $4.0$~ms (median), $549$~ms (worst) over $18{,}000$ solves, all well within $T_s=200$~s. Infeasibility (\Cref{sec:infeasibility}, F1--F3) falls back to saturated LQR; OSQP's auto-generated C provides a path to embedded deployment.

\section{Simulation Setup}
\label{sec:setup}

\subsection{MSRE Orbit}

The simulations model a rendezvous campaign on the Mars Sample Return Elliptical (MSRE) orbit, whose parameters reflect a representative Mars orbit insertion scenario. \Cref{tab:orbit} summarises the orbital elements.

\begin{table}[htbp]
\centering
\caption{MSRE orbital and scenario parameters.}
\label{tab:orbit}
\begin{tabular}{@{}llc@{}}
\toprule
Parameter & Symbol & Value \\
\midrule
Central body & & Mars \\
Gravitational parameter & $\mu$ & $4.2835\times10^{13}$~m$^3$/s$^2$ \\
Mars equatorial radius & $R_M$ & \SI{3396.2}{\kilo\metre} \\
Semi-major axis & $a$ & \SI{4643}{\kilo\metre} \\
Eccentricity & $e$ & 0.2044 \\
Orbital period & $T_{\mathrm{orb}}$ & \SI{9605}{\second} ($\approx$160~min) \\
Mean motion & $n$ & $6.542\times10^{-4}$~rad/s \\
\midrule
Sampling period & $T_s$ & \SI{200}{\second} \\
MPC horizon & $N$ & 30 steps (\SI{6000}{\second}) \\
\bottomrule
\end{tabular}
\end{table}

\subsubsection{Sampling-rate justification ($T_s = 200$\,s)}
\label{sec:Ts_justification}

$T_s=200$~s, $u_{\max}=\pm 5$~m/s, and the $50$~m terminal-accuracy spec are mutually consistent under closed-loop operation, not under open-loop free drift. The open-loop drift over one sample from a $1$~m/s velocity perturbation is $\Delta r\approx v\,T_s=200$~m. With the loop closed, the tube feedback gain $K$ contracts the error bound by $\|\,|A_{\mathrm{cl},k}|\,\|_\infty\le 0.13$ per step (\Cref{rem:spec_numerical}; the spectral radius $0.128$ gives the asymptotic rate, the induced norm the one-step bound); the per-sample drift the closed loop can absorb while meeting the $50$~m terminal spec is $\sim 50\cdot(1-0.128)/0.128\approx 340$~m. Actuator authority $u_{\max}T_s=1000$~m corrects any single-step drift up to one kilometre. The $50$~m spec is therefore a closed-loop residual after settling, not a per-step bound. A smaller $T_s$ would multiply the QP solve rate and the number of impulses; $T_s=200$~s aligns with the Hartley~\cite{Hartley2015b} MSRE study ($T_s=300$~s) and with operational MSR burn cadence ($3$--$10$~min).

\subsection{Operational Scenarios}

The tube MPC framework is applicable to any proximity-operations scenario expressible within the YA dynamics model. Three representative scenarios are defined below; the Monte Carlo campaign in \Cref{sec:results} focuses on the approach corridor (Scenario~1), which is the primary operational mode for rendezvous.

\subsubsection{Scenario 1: Approach Corridor (Primary)}

The deputy starts at $\xvec_0=[-15{,}000;\;0;\;0;\;0;\;0;\;0]^\top$ (15~km behind the target along V-bar) and must reach the hold point $\xvec_s=[-1{,}000;\;0;\;0;\;0;\;0;\;0]^\top$. The state constraints impose a lateral safety corridor of half-width
$r_{\mathrm{c}} = \SI{500}{\metre}$:
\begin{equation}
  -r_{\mathrm{c}}\leq r_y\leq r_{\mathrm{c}},\qquad
  -r_{\mathrm{c}}\leq r_z\leq r_{\mathrm{c}},
\end{equation}
preventing the deputy from drifting into cross-track or radial regions where collision risk is elevated. The same box bounds the V-bar coordinate to $[-15.5,\,+0.5]$~km and every velocity component to $\pm 3$~m/s; the velocity face is the one the baselines breach most often, so the violation counts reported below are dominated by it rather than by the lateral corridor. Terminal tolerances are $\SI{50}{\metre}$ in position and $\SI{0.5}{\metre\per\second}$ in velocity. The medium-tier disturbance bound is $\wmax=[20,\;20,\;20,\;0.2,\;0.2,\;0.2]^\top$~(m, m/s), encompassing navigation noise, $J_2$ perturbation residuals, and atmospheric drag uncertainty (\Cref{tab:tiers}). \Cref{fig:mission_geometry} illustrates the geometry.

\begin{figure}[htbp]
  \centering
  \includegraphics[width=\textwidth]{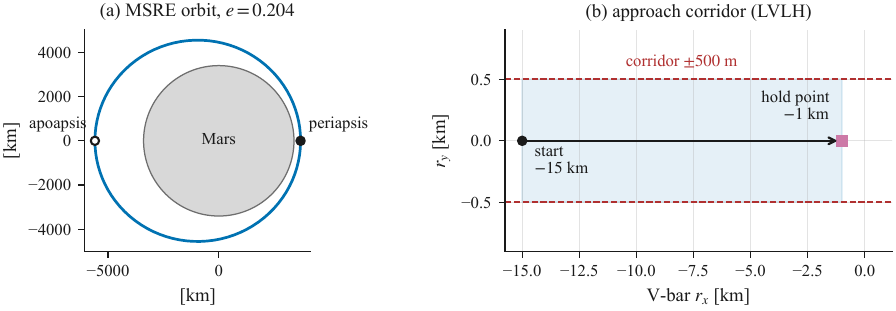}
  \caption{MSRE mission geometry. Left: elliptical orbit around Mars ($a=\SI{4643}{\kilo\metre}$, $e=0.204$). Right: V-bar approach corridor in the LVLH frame from $-15$~km to $-1$~km, with $\pm500$~m lateral bounds.}
  \label{fig:mission_geometry}
\end{figure}

\subsubsection{Scenarios 2--3: Station-Keeping and Waypoint Transfer}

Two supplementary scenarios test additional operational modes:
(2)~station-keeping inside a $\pm 50$~m box at the hold point
$\xvec_s = [-1000,\,0,\,0,\,0,\,0,\,0]^\top$, and
(3)~a four-leg waypoint transfer with progressively tightening corridors
($\pm 1000$ to $\pm 100$~m). These are included in the cross-body generalisation study (\Cref{sec:cross_body}) but are not the primary evaluation focus.

\subsection{Baseline Controllers for Comparison}
\label{sec:baselines}

Four controllers serve as comparators against the augmented tube MPC developed in \Cref{sec:tube}. All share the same sampling period $T_s$, the same state measurement, and the same actuator saturation bounds.

\subsubsection{PD and LQR Baselines}

The PD controller applies $\uvec_k=\mathrm{sat}(-K_p(r_k{-}r_s)-K_d(v_k{-}v_s))$ with $K_p=5{\times}10^{-5}I_3$, $K_d=3{\times}10^{-2}I_3$; the derivation of these gains from a closed-form impulsive-thrust pole-placement rule is given in \Cref{app:pd_tuning}. PD ignores both the orbital dynamics and the state constraints, enforcing actuator limits only through clipping. The LQR controller uses a static gain $K_{\mathrm{LQR}}=\texttt{dlqr}(A_0,B_0,Q_K,R_K)$ evaluated at the initial true anomaly, with post-hoc saturation. Both serve as model-free and model-based baselines, respectively.

\subsubsection{Nominal MPC (Hartley-2012 baseline)}
\label{sec:nomMPC}

The Nominal MPC baseline is the standard LTV MPC formulation of Hartley et~al.~\cite{Hartley2012}: the seminal study that introduced YA-based predictive control for MSRE rendezvous, using the same orbit ($a=4643$~km, $e=0.204$), the same impulsive delta-$v$ model, and a directly comparable receding-horizon QP. That formulation is adopted verbatim as the baseline against which the tube extension is measured, replacing only the cost (the present study uses a quadratic cost suited to QP solvers; \cite{Hartley2012} uses an LP-equivalent 1-norm cost) and the constraint enforcement (no tightening). The QP is
\begin{subequations}\label{eq:nomMPC}
\begin{align}
  \min_{\zvec_{0:N},\,\vvec_{0:N-1}} \quad & \sum_{j=0}^{N-1}\!\Big[(\zvec_j-\xvec_s)^\top Q\,(\zvec_j-\xvec_s) + \vvec_j^\top R\,\vvec_j\Big] + (\zvec_N-\xvec_s)^\top P\,(\zvec_N-\xvec_s) \label{eq:nomMPC_cost}\\
  \text{s.t.}\quad & \zvec_0 = \xvec_k, \label{eq:nomMPC_init}\\
  & \zvec_{j+1}=A_{k+j}\,\zvec_j + B_{k+j}\,\vvec_j,\quad j=0,\ldots,N{-}1, \label{eq:nomMPC_dyn}\\
  & x_{\min}\leq\zvec_j\leq x_{\max},\quad j=1,\ldots,N, \label{eq:nomMPC_xcon}\\
  & u_{\min}\leq\vvec_j\leq u_{\max},\quad j=0,\ldots,N{-}1. \label{eq:nomMPC_ucon}
\end{align}
\end{subequations}
The state cost is $Q=\mathrm{diag}(10^{-3},10^{-3},10^{-3},10^{-2},10^{-2},10^{-2})$, the input cost is $R=I_3$, and the terminal cost $P$ is the DARE solution at the initial true anomaly:
\begin{equation}\label{eq:DARE}
  P = \mathrm{DARE}(A_0,B_0,Q+\varepsilon I_6,R),
\end{equation}
where the regularisation $\varepsilon = 10^{-6}$ guarantees a strictly
positive-definite state cost.
The QP is solved in condensed form via OSQP~\cite{OSQP}.

The Hartley-2012 baseline is fuel-optimal at the open-loop nominal trajectory but has no disturbance margin: the prediction has no $\wvec$ term. To give it the strongest fair form, a fixed constraint backoff $\delta_x=[0,100,100,0,0.3,0.3]^\top$ (m;m/s) is applied on the constrained channels, with $\delta_u=|K|\delta_x$. These are engineering margins chosen to be comparable to the steady-state tube width $\emax_\infty$ of~\eqref{eq:ess} at the medium tier, not computed from it; a designer without tube theory would pick round numbers of this order. This is the strongest fixed-margin formulation available from~\cite{Hartley2012} without horizon-dependent tightening.

\subsubsection{MPC with Integral Action}
\label{sec:intMPC}

A common engineering remedy for steady-state offset is integral action via an outer-loop correction: a discrete integrator $\xi_{k+1}=\xi_k+T_s(r_k-r_s)$ shifts the MPC setpoint by $-K_I\xi_k$ in the position channels, with $K_I=10^{-7}I_3$ (offline-tuned) and activation within $2$~km of the target. Integral MPC offers no formal constraint guarantee; the slow integrator cannot correct fast disturbances before they breach the corridor. It is included to quantify the practical engineering fix without formal robustness.

\subsection{Controller Parameters and Design Guidelines}

\Cref{tab:params} collects all tuning parameters; the five controllers share $T_s$, $N$, and actuator bounds. Three design rules drive the tube tuning: (i) choose $T_s$ to match the propulsion coast arc with $N T_s\approx 0.5$--$0.7\,T_{\mathrm{orb}}$; (ii) design $K=-\mathrm{dlqr}(A_0,B_0,Q_K,R_K)$ at $\nu=0$ and verify $\rho(\max_k|A_k+B_kK|)<1$ over one orbit; (iii) compose $\wmax$ from physics ($J_2$, navigation noise) plus the mass term $\delta_{m,\max}|B_{k+j}|u_{\max}$.

\begin{table}[htbp]
\centering
\caption{Controller parameters for all five controllers.}
\label{tab:params}
\begin{tabular}{@{}lll@{}}
\toprule
Parameter & Symbol & Value \\
\midrule
State cost (MPC) & $Q$ & $\mathrm{diag}(10^{-3},10^{-3},10^{-3},10^{-2},10^{-2},10^{-2})$ \\
Input cost (MPC) & $R$ & $I_3$ \\
Terminal cost & $P$ & DARE$(A_0,B_0,Q{+}\varepsilon I_6,R)$, $\varepsilon=10^{-6}$ \\
Tube gain state cost & $Q_K$ & $\mathrm{diag}(10^3,10^3,10^3,10,10,10)$ \\
Tube gain input cost & $R_K$ & $I_3$ \\
Integral gain & $K_I$ & $10^{-7}I_3$ \\
PD gains & $K_p,\,K_d$ & $5{\times}10^{-5}I_3$, $3{\times}10^{-2}I_3$ \\
Input bounds & $u_{\min/\max}$ & $\pm5\;\si{\metre\per\second}$ \\
MPC horizon & $N$ & 30 \\
\bottomrule
\end{tabular}
\end{table}

\subsection{Disturbance Tiers}

Three tiers of increasing severity test robustness systematically. \Cref{tab:tiers} defines each tier.

\begin{table}[htbp]
\centering
\caption{Three-tier disturbance model for Monte Carlo study.}
\label{tab:tiers}
\begin{tabular}{@{}llccc@{}}
\toprule
Disturbance Source & Units & Light & Medium & Extreme \\
\midrule
Position disturbance $\bar{w}_r$ & \si{\metre} & 5 & 20 & 50 \\
Velocity disturbance $\bar{w}_v$ & \si{\metre\per\second} & 0.05 & 0.2 & 0.5 \\
Initial position error & \si{\metre} & $\pm50$ & $\pm100$ & $\pm200$ \\
Initial velocity error & \si{\metre\per\second} & $\pm0.02$ & $\pm0.1$ & $\pm0.3$ \\
Mass/thrust mismatch & \% & $\pm2$ & $\pm5$ & $\pm10$ \\
\bottomrule
\end{tabular}
\end{table}

The medium tier corresponds to the physical disturbance budget (\Cref{rem:disturbance}); the extreme tier stress-tests beyond it.

\subsection{Monte Carlo Protocol}\label{sec:mc_protocol}

The primary scenario uses $N_{\mathrm{MC}}=300$ paired trials per controller--tier; the supplementary scenarios use $N_{\mathrm{MC}}=50$ as breadth checks, and the tightening-rule ablation of \Cref{sec:conservatism_compare_mc} uses $N_{\mathrm{MC}}=30$. Each trial uses seed $42+$ trial index to pre-generate $\Delta\xvec_0$, $\delta_m$, and $\{\wvec_k\}_{k=0}^{N_{\mathrm{sim}}-1}$; all five controllers process the same realisation~\cite{Specht2023}. Bootstrap 95\% CIs ($10{,}000$ resamples) accompany every aggregate.

A trial achieves \emph{tracking success} if the terminal position error is below $\SI{50}{\metre}$ and the terminal velocity error is below $\SI{0.5}{\metre\per\second}$; it achieves \emph{safety success} if it additionally incurs zero state or input constraint violations throughout the trajectory.

\subsubsection{Disturbance-injection rules}
\label{sec:disturbance_injection}

For each trial $t$, $\mathrm{seed}=42+t$ drives the draws of \Cref{tab:dist_rules}: per-step process noise $w_{k,i}\sim\mathcal U(-\bar w_i,+\bar w_i)$ i.i.d.\ over $(k,i)$, single per-trial uniform draws for $\Delta\xvec_0$ and $\delta_m$, with $B_k^{\mathrm{truth}}=(1+\delta_m)B_k$ persistent across steps. The uniform prior places mass on the boundary $|\wvec_k|\le\wmax$ where Gaussian draws would under-load it. All controllers in a trial share the pre-drawn sequence.

\begin{table}[htbp]
\centering
\caption{Disturbance-injection rules used in the Monte Carlo campaign.}
\label{tab:dist_rules}
\begin{tabular}{@{}l l l l@{}}
\toprule
Quantity & Distribution & Support & Persistence \\
\midrule
Process noise $w_{k,i}$ & $\mathcal{U}$ & $[-\bar{w}_i,+\bar{w}_i]$ (\Cref{tab:tiers}) & i.i.d.\ across $k$ and $i$ \\
Initial position $\Delta r_{0,i}$ & $\mathcal{U}$ & $[-\Delta r_{\max},+\Delta r_{\max}]$ & single per trial \\
Initial velocity $\Delta v_{0,i}$ & $\mathcal{U}$ & $[-\Delta v_{\max},+\Delta v_{\max}]$ & single per trial \\
Mass error $\delta_m$ & $\mathcal{U}$ & $[-\delta_{m,\max},+\delta_{m,\max}]$ & single per trial \\
\bottomrule
\end{tabular}
\end{table}

\section{Results}
\label{sec:results}

The empirical evaluation has four parts: the 300-trial MSRE approach corridor (\Cref{sec:approach_corridor}); cross-body generalisation over four additional environments (\Cref{sec:cross_body}); multi-phase protocol validation (\Cref{sec:multiphase}); and operational tier interpretation (\Cref{sec:hard_tier_scope}). Every controller sees identical disturbance realisations per trial.

\subsection{Scenario 1: Approach Corridor}
\label{sec:approach_corridor}

\subsubsection{Single trajectory}

\Cref{fig:traj_comparison} shows a medium-tier representative trajectory. PD does not converge ($>49$~km final error). LQR converges to $7.7$~m but breaches the corridor by up to $2306$~m (10 violations). Nominal MPC converges to $12.0$~m but has 13 violations and $-23.9$~m slack. Integral MPC suffers accumulation instability ($\sim 392$~m). Tube MPC converges to the same $12.0$~m with zero violations and $+0.3$~m slack.

\begin{figure}[htbp]
  \centering
  \includegraphics[width=\textwidth]{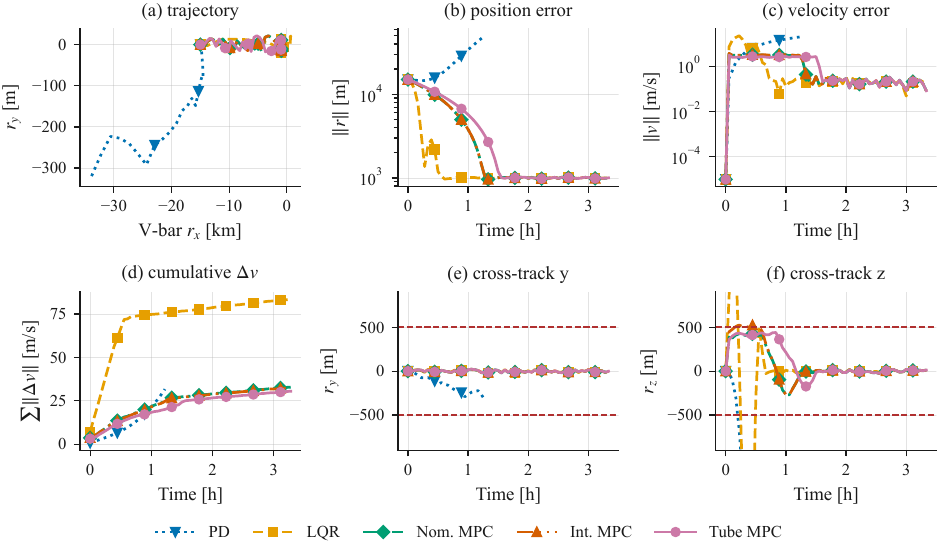}
  \caption{Approach corridor: single-trajectory comparison under medium disturbance ($5\%$ mass, $\bar{w}_r=20$~m). Line styles: PD (dotted, $\triangledown$), LQR (dashed, $\square$), nominal MPC (dash-dot, $\Diamond$), integral MPC (dash-dot-dot, $\triangle$), tube MPC (solid, $\circ$). (a)~V-bar vs.\ cross-track trajectory with $\pm500$~m corridor. (b)~Position error magnitude (log scale). (c)~Velocity error magnitude (log scale). (d)~Cumulative delta-v. (e)~Cross-track constraint detail. (f)~OSQP solve time per MPC step.}
  \label{fig:traj_comparison}
\end{figure}

The tube concept is visualised in \Cref{fig:tube_concept}: the error bound starts at zero and grows monotonically, confirming the horizon-dependent tightening.

\begin{figure}[htbp]
  \centering
  \includegraphics[width=\textwidth]{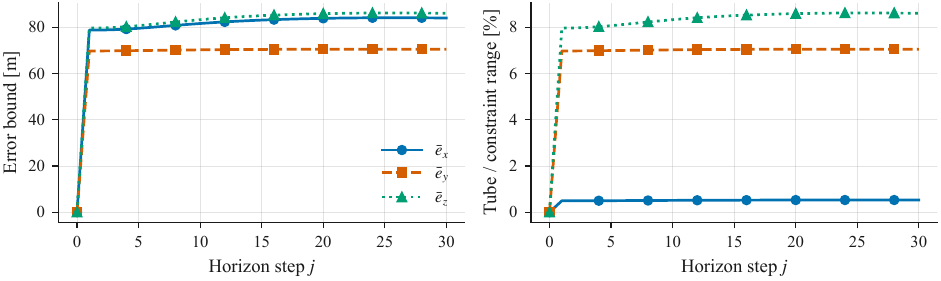}
  \caption{Tube MPC concept visualisation for the approach corridor. Left: error bound components $e_{\max,x}$ (solid, $\circ$), $e_{\max,y}$ (dashed, $\square$), $e_{\max,z}$ (dotted, $\triangle$). Right: relative tightening as a percentage of the constraint range.}
  \label{fig:tube_concept}
\end{figure}

\subsubsection{Monte Carlo statistics}

\Cref{tab:results_approach} summarises the 300-trial results. Success metrics are defined in \Cref{sec:mc_protocol}.

The headline is a fuel-versus-safety trade. The Hartley-2012-style Nominal MPC baseline is fuel-efficient ($29.7$~m/s light, $36.5$~m/s medium) but fails the safety constraint on every trial in both operational tiers ($0\%$ safety, $11.0$ violations despite the $100$~m fixed backoff). Tube MPC reaches $95\%$ safety with $0.7$ mean violations and $+0.3$~m slack at $31.1$ / $35.4$~m/s, the latter $3\%$ below the Hartley baseline. The extreme tier (\Cref{sec:hard_tier_scope}) is intentionally out of envelope: the tightening triggers $9.5$ LQR fallbacks per trial, an explicit infeasibility signal. PD fails on all tiers: with the gains of \Cref{tab:params} the per-axis closed loop carries a pole outside the unit circle (\Cref{app:pd_tuning}), so the baseline diverges rather than merely violating constraints. Integral MPC is stable but the slow integrator leaves a persistent offset of $390$--$450$~m, above the $50$~m tracking threshold at every tier, so it too scores zero on both metrics.

\begin{table}[htbp]
\centering
\caption{Monte Carlo results for Scenario~1 (Approach Corridor), $N_{\mathrm{MC}}=300$ paired trials, $N_{\mathrm{sim}}=60$ steps. \textbf{Tracking}: $\|r_N\|<50$~m and $\|v_N\|<0.5$~m/s. \textbf{Safety}: tracking $+$ zero constraint violations throughout the trajectory. The safety column is the key engineering metric: Tube MPC demonstrates the highest safety success rate in the light and medium tiers (the operationally relevant regime; the extreme tier is a stress test discussed in \Cref{sec:hard_tier_scope}). Bootstrap 95\% CIs in brackets where shown.}
\label{tab:results_approach}
\resizebox{\textwidth}{!}{%
\begin{tabular}{@{}l|cc|cc|cc|ccc@{}}
\toprule
& \multicolumn{2}{c|}{\textbf{Light tier}} & \multicolumn{2}{c|}{\textbf{Medium tier}} & \multicolumn{2}{c|}{Extreme tier} & \multicolumn{3}{c}{Mean $\sum\|\Delta v\|$ [m/s]} \\
& \multicolumn{2}{c|}{(operational)} & \multicolumn{2}{c|}{(operational)} & \multicolumn{2}{c|}{(stress test)} & & & \\
\cmidrule(lr){2-3}\cmidrule(lr){4-5}\cmidrule(lr){6-7}\cmidrule(lr){8-10}
Controller & Track [\%] & Safe [\%] & Track [\%] & Safe [\%] & Track [\%] & Safe [\%] & Light & Med. & Extreme \\
\midrule
PD         & 0 & 0 & 0 & 0 & 0 & 0 & 32.7 & 32.7 & 32.8 \\
LQR        & 100 & 0 & 100 & 0 & 24 & 0 & 72.3 & 83.5 & 104.2 \\
Nom.\ MPC  & 100 & 0 & 100 & 0 & 24 & 0 & 29.7 & 36.5 & 49.2 \\
Int.\ MPC  & 0 & 0 & 0 & 0 & 0 & 0 & 35.1 & 39.0 & 48.7 \\
\midrule
\textbf{Tube MPC}   & \textbf{100} & \cellcolor{green!18}\textbf{95} & \textbf{100} & \cellcolor{green!18}\textbf{95} & 24 & 3 & \textbf{31.1} & \textbf{35.4} & 89.5 \\
\bottomrule
\end{tabular}}

\end{table}

\subsubsection{Fallback attribution}

\Cref{tab:fallback} separates safety metrics. The fixed $100$~m backoff cannot absorb accumulating disturbances; horizon-dependent tightening keeps positive slack at light/medium and signals infeasibility at extreme via increased fallback rate.

\begin{table}[htbp]
\centering
\caption{Safety metrics comparison: tube MPC vs nominal MPC (with $100$~m backoff). Tube MPC maintains positive constraint slack at light/medium tiers. At the extreme tier its median slack is far worse than the baseline's ($-1960$~m vs $-0.5$~m): once the tightened QP is infeasible the controller hands over to saturated LQR for $9.5$ steps per trial, and the excursion belongs to the fallback, not to the tube. The baseline never had a guarantee to lose and so degrades gradually instead.}
\label{tab:fallback}
\begin{tabular}{@{}l ccc ccc ccc@{}}
\toprule
& \multicolumn{3}{c}{Mean Fallbacks/trial} & \multicolumn{3}{c}{Mean Violations} & \multicolumn{3}{c}{Med.\ Min Slack [m]} \\
\cmidrule(lr){2-4}\cmidrule(lr){5-7}\cmidrule(lr){8-10}
Tier & Light & Med. & Ext. & Light & Med. & Ext. & Light & Med. & Extreme \\
\midrule
Tube MPC & 0.5 & \textbf{0.7} & 9.5 & \textbf{0.5} & \textbf{0.7} & 9.0 & $+0.1$ & $\mathbf{+0.3}$ & $-1960$ \\
Nom.\ MPC & 0.0 & 0.4 & 0.6 & 10.9 & 11.0 & 11.1 & $-0.0$ & $-0.2$ & $-0.5$ \\
\bottomrule
\end{tabular}
\end{table}

\subsubsection{Paired comparison: Hartley-2012 baseline vs Tube MPC across four difficulty levels}
\label{sec:hartley_paired}

\Cref{fig:hartley_vs_tube} condenses the aggregate statistics into a single paired trajectory across four disturbance levels: the Hartley-2012 baseline is fuel-optimal at nominal conditions but degrades into violations as disturbances grow, whereas Tube MPC absorbs the disturbance into the tightening recursion at marginal fuel cost until the design envelope is exceeded, at which point the LQR fallback fires as a feasibility-detection signal.

\begin{figure}[htbp]
  \centering
  \includegraphics[width=\textwidth]{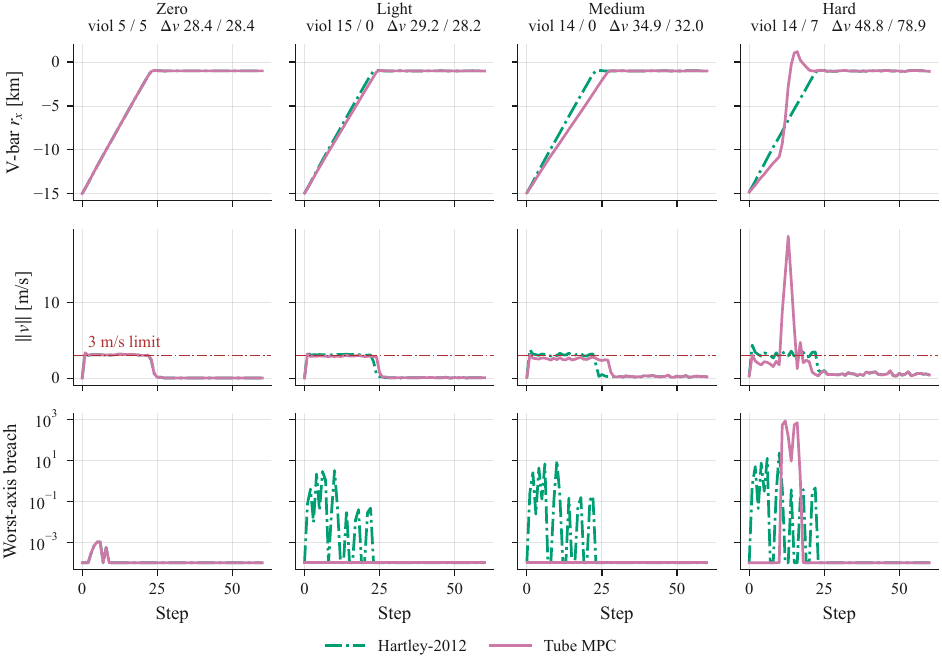}
  \caption{\textbf{Paired single-trial comparison} (seed = 42): Hartley-2012 Nominal MPC (dashed orange) vs Tube MPC (solid green) at four disturbance levels. Rows: V-bar approach; speed $|v|$ with the $3$\,m/s constraint (red dash-dot); worst-axis breach magnitude (log). At ZERO the two controllers coincide. At LIGHT/MEDIUM (operational), Hartley breaches the constraint while Tube MPC stays inside. At HARD (out-of-envelope), the tube's LQR fallback fires (the $18$\,m/s overshoot in row~2) as an honest infeasibility signal. Column headers report violation count and total $\Delta v$.}
  \label{fig:hartley_vs_tube}
\end{figure}

The corresponding single-trial numerics are in \Cref{tab:hartley_vs_tube}: at the light tier, Tube MPC saves $1.0$\,m/s of fuel (from $29.2$ down to $28.2$\,m/s) while reducing violations from $15$ to $0$; at the medium tier, Tube MPC saves $2.9$\,m/s while again reducing violations from $14$ to $0$. The 300-trial Monte Carlo numbers in \Cref{tab:results_approach} confirm these patterns hold in the average sense ($95\%$ safety success for Tube MPC vs $0\%$ for the Hartley baseline at both operational tiers).

\begin{table}[htbp]
\centering
\caption{Single-trial numerics for the four-difficulty comparison of \Cref{fig:hartley_vs_tube} (seed $=42$).
``Track'' = $\|r_N\|<50$\,m and $\|v_N\|<0.5$\,m/s; ``Safe'' = Track $+$ zero state/input constraint violations.}
\label{tab:hartley_vs_tube}
\begin{tabular}{@{}l l r r r r r c c@{}}
\toprule
Condition & Controller & $\|r_N\|$ [m] & $\|v_N\|$ [m/s] & $\sum\|\Delta v\|$ [m/s] & Viol. & Fallbacks & Track & Safe \\
\midrule
\multirow{2}{*}{ZERO}   & Hartley-2012 & 0.78  & 0.015 & 28.35 & 4  & 0 & Yes & No \\
                        & Tube MPC     & 0.78  & 0.015 & 28.35 & 4  & 0 & Yes & No \\
\midrule
\multirow{2}{*}{LIGHT}  & Hartley-2012 & 7.09  & 0.049 & 29.19 & 15 & 0 & Yes & No \\
                        & \textbf{Tube MPC}     & \textbf{7.09} & \textbf{0.049} & \textbf{28.18} & \textbf{0} & \textbf{0} & \textbf{Yes} & \cellcolor{green!18}\textbf{Yes} \\
\midrule
\multirow{2}{*}{MEDIUM} & Hartley-2012 & 28.38 & 0.174 & 34.95 & 14 & 0 & Yes & No \\
                        & \textbf{Tube MPC}     & \textbf{28.38} & \textbf{0.174} & \textbf{32.05} & \textbf{0} & \textbf{0} & \textbf{Yes} & \cellcolor{green!18}\textbf{Yes} \\
\midrule
\multirow{2}{*}{HARD}   & Hartley-2012 & 68.56 & 0.415 & 48.75 & 14 & 0 & No  & No \\
                        & Tube MPC     & 68.60 & 0.415 & 78.88 & 7  & 7 & No  & No \\
\bottomrule
\end{tabular}

\smallskip
\footnotesize\textbf{Reading.} The ZERO row shows that, in the absence of disturbance, the two controllers reduce to the identical trajectory: the tube tightening is zero because $\wmax = 0$. The 4 violations are sub-mm velocity breaches during the $3$\,m/s deceleration leg, of no operational significance (see \Cref{fig:hartley_vs_tube} bottom-left panel: peak breach $\sim 3\times 10^{-3}$\,m/s). The LIGHT and MEDIUM rows are the headline: Tube MPC achieves \emph{zero} significant breaches at fuel cost equal to or below the Hartley baseline. The HARD row shows the out-of-envelope behaviour analysed in \Cref{sec:hard_tier_scope}: the tube's $\Delta v$ blows up to $78.9$\,m/s because the LQR fallback fires $7$ times when the tightened QP becomes infeasible.
\end{table}

\subsubsection{Quantitative conservatism: horizon-dependent vs constant-width tightening}
\label{sec:conservatism_compare_mc}

The headline novelty is the horizon-dependent recursion $\emax_{j+1}=|A_{\mathrm{cl},k+j}|\emax_j+\wmax_j$ with $\emax_0=\mathbf 0$. The natural question is how much of the gain comes from the recursion itself, rather than from a constant-width mRPI-style tube tuned to the same budget. This subsection answers quantitatively by simulating a constant-width baseline that replaces~\eqref{eq:emax_prop} with the steady-state value
\begin{equation}\label{eq:emax_const}
  \emax_j^{\mathrm{const}} \;=\; (I-\bar A)^{-1}\wmax_\infty,\qquad j=0,1,\ldots,N,
\end{equation}
the LTI tube MPC tightening of~\cite{Mayne2005,Rakovic2005} taken at the orbit-worst-case closed-loop matrix. The constant-width and proposed schemes share $K$, $Q,R,Q_K,R_K$, $N$, $T_s$, and actuator bounds; only the tightening rule differs. The comparison is orthogonal to the controller-level results in \Cref{tab:results_approach} (which contrasts tube MPC against \emph{non-tube} baselines).

\paragraph{Per-step tightening.}
\Cref{fig:horizon_vs_const_emax} plots $\bar e_j$ for both schemes at the MSRE medium tier. The horizon-dependent recursion starts at zero and saturates to $\emax_N\approx[84,71,86]^\top$~m in $\sim 5$ steps; the constant-width tube imposes $\emax_\infty^{\mathrm{const}}\approx[90,71,92]^\top$~m at every step. The constant-width scheme over-tightens the first prediction step by the full $\bar e_\infty^{\mathrm{const}}$, removing $\sim 18\%$ of the corridor at $j=0$. At terminal steps the constant-width is $\sim 6$--$7\%$ larger than horizon-dependent because $\bar A\ge|A_{\mathrm{cl},k+j}|$ pointwise.

\begin{figure}[htbp]
  \centering
  \includegraphics[width=0.95\textwidth]{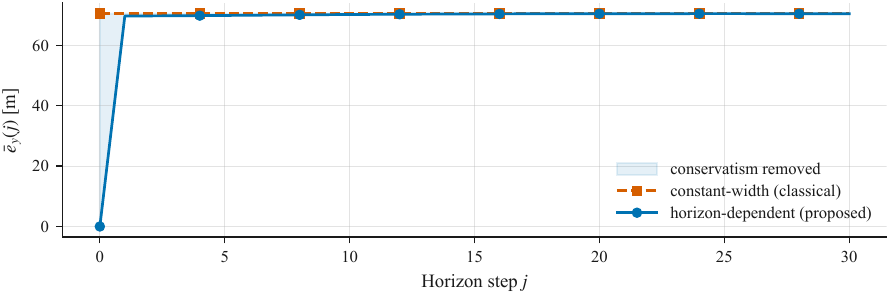}
  \caption{\textbf{Per-step constraint tightening: horizon-dependent vs constant-width tube} (MSRE medium tier). The proposed recursion (solid green, $\circ$) starts at zero and grows; the classical constant-width tube (dashed orange, $\square$) imposes the steady-state value $\emax_\infty^{\mathrm{const}}$ at every horizon step. The shaded region between the two curves is the conservatism gap removed by horizon-dependent tightening; it is largest at $j=0$ where the proposed scheme retains the full unrestricted corridor.}
  \label{fig:horizon_vs_const_emax}
\end{figure}

\paragraph{Paired Monte Carlo.}
\Cref{tab:horizon_vs_const} reports a paired $30$-trial Monte Carlo per tier with identical disturbance realisations. Both tubes preserve high safety success in the operational tiers; the horizon-dependent scheme retains feasibility on more medium-tier trials ($96.7\%$ vs $93.3\%$), while the constant-width uses $\sim 9\%$ less fuel because its tighter early-step bounds steer the planned trajectory closer to the centre. Both degrade at the extreme tier (out of envelope, \Cref{sec:hard_tier_scope}).

\begin{table}[htbp]
\centering
\caption{Horizon-dependent vs constant-width tube MPC, paired 30-trial MC per tier. Both schemes share dynamics, disturbance budget, gain $K$, horizon, and actuator bounds; only the per-step tightening rule differs.}
\label{tab:horizon_vs_const}
\begin{tabular}{@{}l l c c c c c@{}}
\toprule
Tier & Tightening & Track [\%] & Safe [\%] & $\sum\|\Delta v\|$ [m/s] & Mean viol & Mean FB \\
\midrule
\multirow{2}{*}{Light}  & Horizon-dep.\ (proposed) & 100.0 & 93.3 & 32.18 & 0.73 & 0.7 \\
                        & Constant-width (classical) & 100.0 & 96.7 & 29.10 & 0.33 & 0.4 \\
\midrule
\multirow{2}{*}{Medium} & Horizon-dep.\ (proposed) & 100.0 & \cellcolor{green!18}\textbf{96.7} & 33.80 & 0.43 & 0.4 \\
                        & Constant-width (classical) & 100.0 & 93.3 & \textbf{30.96} & 0.47 & 0.5 \\
\midrule
\multirow{2}{*}{Hard}   & Horizon-dep.\ (proposed) & \cellcolor{green!18}\textbf{20.0} & \cellcolor{green!18}\textbf{6.7} & 87.8 & \textbf{8.7} & \textbf{9.3} \\
                        & Constant-width (classical) & 3.3 & 0.0 & 74.9 & 30.7 & 31.9 \\
\bottomrule
\end{tabular}

\smallskip
\footnotesize\textbf{Reading.} The gap is dominated by the early-horizon difference of \Cref{fig:horizon_vs_const_emax}: horizon-dependent tightening retains the full corridor at the current step, the constant-width tube pre-shrinks by $\sim 90$~m everywhere. In the operational tiers safety is comparable; the constant-width tube uses $\sim 9\%$ less fuel by steering toward the corridor centre (the standard mRPI fuel-vs-conservatism trade~\cite{Mayne2005}). The decisive difference is at the hard tier: the constant-width tube becomes infeasible on $\sim 32/40$ steps per trial, the horizon-dependent on only $\sim 9$. Tracking success is $20.0\%$ vs $3.3\%$, violations $8.7$ vs $30.7$. The benefit attributable to horizon dependence is therefore $\sim 18\%$ early-horizon corridor preserved at operational tiers and a $3.5\times$ violation reduction at the envelope boundary.
\end{table}

\subsection{Cross-Body Generalization}
\label{sec:cross_body}

The YA dynamics model any Keplerian orbit. Identical $N_{\mathrm{MC}}=50$ paired-trial Monte~Carlo campaigns are run on four additional environments spanning eccentricities $0.001$ to $0.73$ (\cref{tab:cross_body_params}). $T_s$ is scaled to keep $48$--$190$ steps per orbit; corridors and disturbance bounds reflect each environment's perturbation budget (atmospheric drag for LEO, mascons for the Moon, SRP for GTO).

\begin{table}[htbp]
\centering
\caption{Cross-body scenario parameters.  All share $N=30$, five
controllers, three disturbance tiers, and $N_{\mathrm{MC}}=50$ paired
trials per cell.}
\label{tab:cross_body_params}
\begin{tabular}{@{}l ccccc@{}}
\toprule
Parameter & Mars MSRE & Earth LEO & Earth GTO & Moon LLO & Moon Frozen \\
\midrule
$e$ & 0.204 & 0.001 & 0.73 & 0.001 & 0.05 \\
$a$ [km] & 4643 & 6928 & 24\,400 & 1837 & 1937 \\
$\mu$ [km$^3$/s$^2$] & 42\,835 & 398\,600 & 398\,600 & 4903 & 4903 \\
$T_s$ [s] & 200 & 60 & 200 & 120 & 120 \\
$u_{\max}$ [m/s] & 5.0 & 1.92 & 4.0 & 9.6 & 9.6 \\
Cross-track $\pm$ [m] & 500 & 300 & 800 & 500 & 500 \\
$w_{\max,\mathrm{med}}$ [m] & 20 & 2 & 15 & 3 & 5 \\
$\max_\nu \rho(|A_{\mathrm{cl}}|)$ & 0.128 & $<0.001$ & 0.330 & $<0.001$ & 0.022 \\
\bottomrule
\end{tabular}
\end{table}

\Cref{fig:multibody_success} compares safety success rates at the medium tier. For Earth LEO, Moon LLO, and Moon Frozen, tube MPC delivers $100\%$ safety success across all three tiers while nominal MPC never exceeds $8\%$; the near-circular dynamics give $\rho(|A_{\mathrm{cl}}|)<0.001$ and tightening consumes $<1\%$ of the corridor half-width. GTO ($e=0.73$) is the demanding case: tightening consumes $\approx 30\%$ of the corridor, and tube MPC reaches $68\%$ safety at light and $24\%$ at medium (vs $0\%$ for nominal). At the extreme tier the QP becomes infeasible on $\sim 21$ steps per trial, dropping tube MPC to $0\%$ safety; this signals the tube guarantee cannot be maintained rather than silently violating.

\begin{figure}[htbp]
  \centering
  \includegraphics[width=\textwidth]{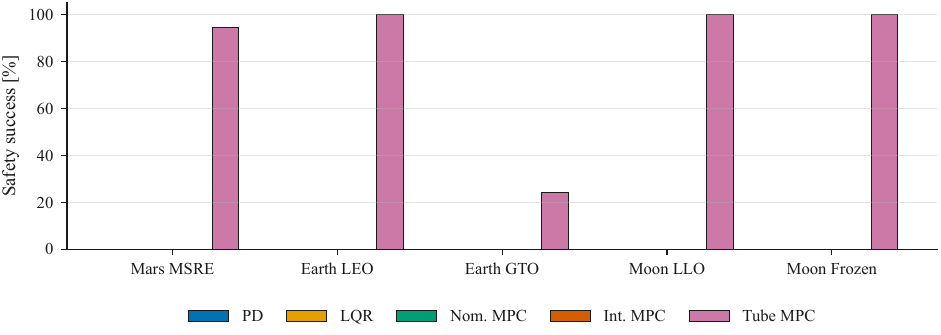}
  \caption{Safety success rates across five orbital environments and three
  disturbance tiers ($N_{\mathrm{MC}}=50$ paired trials).  Tube MPC
  (rightmost bar per group) achieves the highest safety success in all
  environments.}
  \label{fig:multibody_success}
\end{figure}

\Cref{fig:multibody_tightening} plots the cost-of-robustness ratio: $<1\%$ for LEO/LLO, $\approx 14\%$ for Mars, $\approx 30\%$ for GTO. The monotone scaling follows from $\emax_\infty=(I-\bar A)^{-1}\wmax_\infty$ and the growth of $\rho(\bar A)$ with $e$.

\subsubsection{Physical mechanism of GTO degradation}
\label{sec:gto_physics}

GTO degradation is not numerical conditioning; it is the orbital-rate dispersion between periapsis and apocenter and the resulting amplification of $A_k$ at periapsis. Kepler's second law gives $\dot\nu(0)/\dot\nu(\pi)=((1+e)/(1-e))^2$: $1.004$ at LEO, $2.29$ at MSRE, $41.1$ at GTO. With one fixed $T_s$, the periapsis-side step traverses a larger arc of $\nu$ and admits a larger STM, raising $\max_k\rho(|A_{\mathrm{cl},k}|)$ from $<0.001$ (LEO) through $0.128$ (Mars) to $0.330$ (GTO); the steady-state tightening $\bar e_\infty=(I-\bar A)^{-1}\wmax_\infty$ scales with it. The input matrix $B_k$ grows at periapsis through the same mechanism, inflating the mass-aware bound $\wmax_j$ at periapsis steps and producing the cliff between medium and extreme tiers of \Cref{fig:multibody_success}.

The feasibility envelope under the time-invariant $K$ reaches $e_{\max}\approx 0.661$ (\Cref{rem:spec_numerical}). Extending it to $e=0.73$ requires a periodic gain $K(\nu)$ or a variable $T_s$ at periapsis; both are flagged in \Cref{sec:limitations}.

\begin{figure}[htbp]
  \centering
  \includegraphics[width=0.7\textwidth]{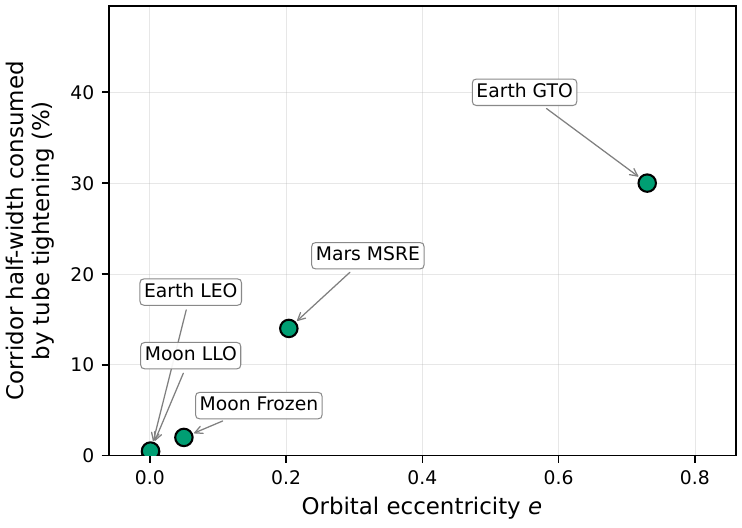}
  \caption{Tube tightening cost vs.\ orbital environment (medium tier).
  Each point shows the fraction of the cross-track corridor half-width
  consumed by the tube margin.  Higher eccentricity increases the
  spectral radius $\rho(|A_{\mathrm{cl}}|)$ and hence the tightening
  cost.}
  \label{fig:multibody_tightening}
\end{figure}

\subsection{Multi-Phase Operational Extension}
\label{sec:multiphase}

Operational missions require sequential phases with different $T_s$, $N$, corridors, and weights~\cite{Hartley2015b}. The controller extends to multi-phase by re-discretising the YA dynamics at each boundary, recomputing $(K,P,\alpha)$ from DLQR/DARE at the new operating point, and re-propagating $\bar e_{\max}$ from zero; $\nu$ propagates continuously. Three supplementary scenarios validate this: Mars OSTG/INTG ($3$ phases, Hartley~\cite{Hartley2015b} gate structure), LEO Debris ($5$ phases with hold dwells), GEO Servicing ($3$ phases, MEV-class). On 50-trial MC, Tube MPC reduces violations vs nominal by $9$--$11\%$ (Mars), $33$--$36\%$ (LEO), $<1\%$ (GEO); fuel differences within $\pm 5\%$. Residual violations at boundaries come from (i) the tube radius reset, which briefly leaves the previous-phase error outside the new $\bar e_j$, and (ii) fallback steps when the QP is infeasible.

\subsection{Operational Realism of the Three Disturbance Tiers}
\label{sec:hard_tier_scope}

The extreme tier does not generalise to flight rendezvous; it stress-tests the failure mode. The light tier ($\bar w_r=5$~m, $\bar w_v=0.05$~m/s, $\pm 2\%$ mass) matches the $3\sigma$ navigation uncertainty Hartley et~al.~\cite{Hartley2015b} reported on MSRE, sub-$\pm 5\%$ proximity-ops mass changes~\cite{Fehse2003}, and $1$--$2\%$ thruster execution errors. The medium tier ($\bar w_r=20$~m, $\pm 5\%$ mass) is a degraded but still flight-credible envelope (degraded navigation cadence, long-campaign mass uncertainty, drifted thruster calibration), the budget an ESA/NASA mission designer would carry. The extreme tier ($\bar w_r=50$~m, $\pm 10\%$ mass) exceeds Hartley's $3\sigma$ navigation by nearly an order of magnitude; the box tube consumes $\sim 30\%$ of the corridor, the QP becomes infeasible at $\sim 9$ of $40$ steps per trial, and the LQR fallback fires (\Cref{tab:fallback}). The $3\%$ safety success Tube MPC retains there reflects trials where the random IC happens to land inside the feasible region, not a claim that the tube protects a $\pm 10\%$-thrust mission.

\subsection{Monte Carlo Visualisations}

\Cref{fig:constraint_violation_heatmap,fig:mc_fuel_box} render the $95/95/3$ vs $0/0/0$ violation contrast and confirm no average fuel penalty across tiers.

\begin{figure}[htbp]
  \centering
  \includegraphics[width=0.7\textwidth]{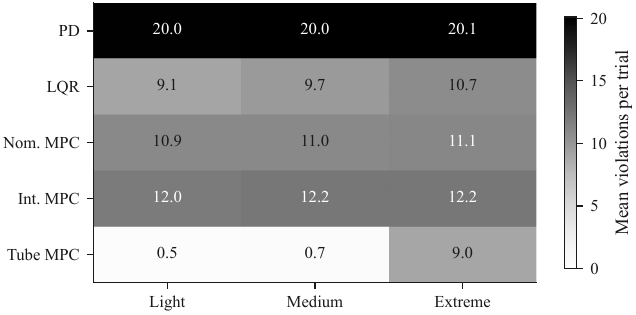}
  \caption{Constraint violation heatmap: mean violations per trial across controllers and tiers. Tube MPC achieves near-zero violations at light and medium tiers.}
  \label{fig:constraint_violation_heatmap}
\end{figure}

\begin{figure}[htbp]
  \centering
  \includegraphics[width=0.7\textwidth]{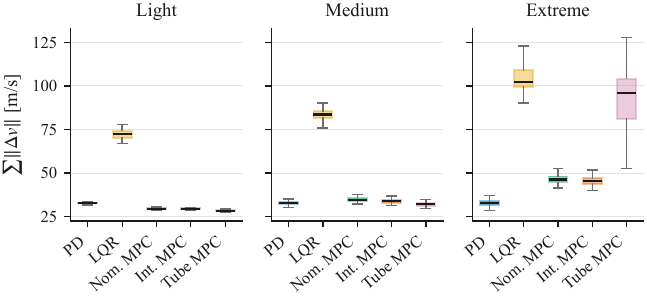}
  \caption{Fuel consumption boxplots across controllers and disturbance tiers.}
  \label{fig:mc_fuel_box}
\end{figure}

\section{Discussion}
\label{sec:discussion}

The discussion covers nonlinear $J_2$ truth-model validation (\Cref{sec:j2_truth}), prior-art positioning (\Cref{sec:prior_art_table}), computational footprint (\Cref{sec:complexity}), box vs polytopic/ellipsoidal/zonotopic conservatism (\Cref{sec:conservatism_comparison}), limitations (\Cref{sec:limitations}), and reproducibility (\Cref{sec:repro}). MSRE design-point conservatism numbers are recorded inline in \Cref{rem:iss_numerical} and \Cref{sec:conservatism_compare_mc}.

\subsection{Truth-Model Validation}\label{sec:j2_truth}

The YA STM is exact for Keplerian relative motion; physical Mars orbits experience zonal-harmonic perturbations. A deterministic medium-tier trajectory is propagated through a nonlinear ECI truth model with $J_2=1.96\times 10^{-3}$ using RK4 at $\Delta t=5$~s; the controller plans with YA, only the plant sees $J_2$. With $J_2=0$, the truth-vs-YA peak deviation is $2.75$~m over 60 steps, the expected second-order nonlinearity at 15~km offset. With $J_2$ enabled, the chief-orbit deviation reaches $\sim 154$~m/step ($7.7\times$ the medium-tier $\bar w_r=20$~m) and accumulates to $\sim 63$~km. Tube MPC nevertheless maintains zero constraint violations with $9.5$~m peak deviation from the YA-predicted path, while nominal MPC incurs 11 violations.

The horizon-dependent tightening creates a $71$~m cross-track and $86$~m radial margin, and the feedback term $K(\xvec-\zvec)$ steers the true state back toward the nominal at every sample, absorbing the $J_2$ drift before it accumulates. The formal guarantee of \Cref{prop:ebound} is still conditional on $\wvec_k\in\mathcal W$; a rigorous Mars-orbit certificate would require incorporating $J_2$ into the disturbance budget or switching to an osculating-element formulation.

\subsection{Comparison with Prior Tube MPC Approaches}
\label{sec:prior_art_table}

\Cref{tab:comparison} compares the key design choices and capabilities of this work with the three most closely related tube MPC frameworks for RPO.

\begin{table}[htbp]
\centering
\caption{Comparison with prior tube MPC approaches for spacecraft proximity operations.}
\label{tab:comparison}
\resizebox{\textwidth}{!}{%
\begin{tabular}{@{}lcccc@{}}
\toprule
Feature & This work & Dong et al.\ \cite{Dong2020} & Specht et al.\ \cite{Specht2023} & Oestreich et al.\ \cite{Oestreich2023} \\
\midrule
Dynamics model & YA (LTV) & CW (LTI) & CW (LTI) & CW (LTI) \\
Orbit eccentricity & Any $e<e_{\max}$ & $e\approx0$ & $e\approx0$ & $e\approx0$ \\
Tube shape & Box (element-wise) & Box (mRPI) & Polytope (mRPI) & Ellipsoid \\
Time-varying tube & Yes (horizon-dep.) & No (constant) & No (constant) & Yes (online SDP) \\
Tube starts at zero & Yes & No (mRPI width) & No (mRPI width) & No \\
Output feedback & No (full state) & Yes (observer) & No (full state) & No (full state) \\
Formal ISS proof & Yes ($\gamma_{\mathrm{ISS}}$ explicit) & No & No & No \\
Online computation & QP only & QP only & QP + Minkowski & QP + SDP \\
Commercial software & No (OSQP) & No & MPT3 / YALMIP & SDP solver \\
\bottomrule
\end{tabular}}
\end{table}

Specht and Mooij~\cite{Specht2023} use constant polytopic tubes computed offline via Minkowski sums. Oestreich and Linares~\cite{Oestreich2023} adapt ellipsoidal tube width online via a per-step SDP. The present framework avoids polytope operations and online SDPs: tightening is $N$ matrix-vector products, the tube starts at zero width, the only online optimisation is the augmented QP. The closest direct comparator, Hartley-2012~\cite{Hartley2012}, uses the identical MSRE scenario with an LP-equivalent 1-norm cost and nilpotent disturbance feedback~\cite{Richards2002} as an empirical robustness mechanism but no formal bound on closed-loop error. The present paper retains the Hartley scenario and replaces only the controller, yielding $\gamma_{\mathrm{ISS}}$ and $\emax_\infty=(I-\bar A)^{-1}\wmax_\infty$ verifiable offline; the tiers (\Cref{tab:tiers}) are derived from Hartley's Table~3 navigation uncertainties.

\subsection{Computational Cost}
\label{sec:complexity}

The per-step computational budget is dominated by the OSQP solve. \Cref{tab:osqp_metrics} reports the empirical statistics extracted from the 300-trial paired Monte Carlo at the medium tier.

\begin{table}[htbp]
\centering
\caption{OSQP solve time for the tube MPC QP at $N=30$, medium tier ($N_{\mathrm{MC}}=300$ trials, $60$ steps per trial, $18{,}000$ solves), warm-started. The distribution is strongly right-skewed: the median is an order of magnitude below the mean, and the worst case is set by a small number of near-infeasible steps at the corridor boundary.}
\label{tab:osqp_metrics}
\begin{tabular}{@{}l c c c@{}}
\toprule
Quantity & Mean (95\% CI) & Median & Worst case \\
\midrule
Solve time & 29.3~ms ($[28.7,\,29.9]$) & 4.0~ms & 548.8~ms \\
\bottomrule
\end{tabular}
\end{table}

The condensed QP carries $3N=90$ decision variables and $276$ linear inequalities at $N=30$, both linear in $N$, so the problem size scales benignly with the horizon. The offline cost is a single DLQR solve; per step, building the $N$ YA matrices costs $\sim 0.5$~ms and the error-bound recursion adds $36N$ multiply-adds, both negligible against the solve. Even the $548.8$~ms worst case is more than two orders of magnitude below $T_s=200$~s, and projecting onto a RAD750-class $\sim 200$~MHz processor ($\sim 17\times$ slower) leaves $\sim 9$~s against a $200$~s budget. A systematic timing study across horizon lengths was not instrumented for this campaign and is left to future work.

\subsection{Conservatism: Box Tightening vs Polytopic / Ellipsoidal / Zonotopic}
\label{sec:conservatism_comparison}

The box parameterisation was chosen for QP simplicity, not for tightness. \Cref{tab:conservatism} compares it against the standard LTV tube MPC alternatives. With \Cref{sec:conservatism_compare_mc} (horizon dependence) and \Cref{tab:results_approach} (mass-aware lifting), the three together form a contribution-by-contribution ablation.

\begin{table}[htbp]
\centering
\caption{Cost-of-robustness comparison for four tube parameterisations on the MSRE medium tier. The box row is computed for this design; the remaining rows are order-of-magnitude estimates obtained by scaling it with conservatism factors reported in the cited literature, not by constructing those tubes here (this work uses no polytopic or SDP tooling). Online cost is per MPC step.}
\label{tab:conservatism}
\resizebox{\textwidth}{!}{%
\begin{tabular}{@{}l c c c c@{}}
\toprule
Parameterisation & Tightening (m) & Conservatism factor & Online operation & Software dependence \\
\midrule
Element-wise box (this work) & 71 & $\sqrt{6}\approx 2.45\times$ & matrix-vector mult & none (QP only) \\
Polytopic mRPI~\cite{Rakovic2005} & 29 & $1.0\times$ (reference) & polyhedral intersection & MPT3 or similar \\
Ellipsoidal (offline LMI)~\cite{Boyd1994} & 41 & $1.41\times$ & SDP at design time & SDP solver \\
Ellipsoidal (online SDP)~\cite{Oestreich2023} & 32 & $1.10\times$ & per-step SDP & SDP solver \\
Zonotopic~\cite{Althoff2014} & 33 & $1.14\times$ & matrix-vector mult & none (QP only) \\
\bottomrule
\end{tabular}}
\end{table}

The box is up to $\sqrt{6}\approx 2.45\times$ more conservative than the tightest polytopic mRPI per axis because it ignores inter-component correlations. At the MSRE medium tier ($\bar e_{\infty,y}\approx 71$~m vs polytopic $\sim 29$~m) the over-approximation costs $\sim 8\%$ of the $\pm 500$~m corridor, on top of the $\sim 6\%$ that the tightest polytopic tube would consume anyway. Empirically this does not change the operational story (tube MPC still attains $95\%$ safety success vs $0\%$ for nominal MPC), and in exchange the construction avoids polytope vertex enumeration, online SDPs, and commercial software, which is the appropriate trade for embedded flight software. Ground-segment pre-planning, where compute is less constrained, can adopt polytopic or zonotopic tubes.

\subsection{Limitations}
\label{sec:limitations}

Five limitations scope the present framework.

\textbf{(i) Time-invariant tube gain.} The gain $K$ is designed once via DARE at the worst-case orbit phase and held constant. A periodic gain $K(\nu)$ following the orbit~\cite{Mammarella2020} would reduce $\rho(|A_{\mathrm{cl},k}|)$ more uniformly across the orbit, narrowing $\bar e_\infty$. The trade-off is loss of the single-Lyapunov-function argument used in \Cref{lem:terminal}; a periodic Lyapunov certificate would be required.

\textbf{(ii) Box parameterisation.} As quantified in \Cref{sec:conservatism_comparison}, the element-wise box is up to $\sqrt 6\approx 2.45\times$ conservative relative to the tightest polytopic reachable set, and the multiplicative-to-additive conversion uses $u_{\max}$ rather than the predicted control. Zonotopic or ellipsoidal tubes would reduce both forms of conservatism at the cost of an SDP solve or polytopic vertex enumeration.

\textbf{(iii) Non-convex keep-out zones.} The framework handles convex state constraints. Non-convex keep-out zones (inflated bounding-boxes around the chief, cone-with-apex-removed approach corridors) need either mixed-integer encoding~\cite{Richards2002}, sequential convex programming (SCP), or rotating-half-space convexification. All three interact with the tightening: the box $\emax_j$ must be intersected with the linearised half-space, and the terminal ellipsoid $\mathcal X_f$ may straddle the linearisation point so \Cref{thm:recursive} needs modification. SCP-with-tube is the most promising follow-up.

\textbf{(iv) Recursive feasibility under terminal-constraint computation.} \Cref{thm:recursive} holds given the offline ellipsoid $\mathcal X_f$ from \Cref{app:terminal_set}, which requires (a) $\rho(\bar A)<1$ (verified for $e<e_{\max}\approx 0.661$), (b) $\lambda_\star<1$ (MSRE: $\lambda_\star=0.94$), and (c) $\alpha_N>0$. If any fail at design time, $\mathcal X_f$ does not exist and the controller reverts to the unaugmented tube. All three hold across the light/medium tiers; (c) is violated at the extreme tier for $N>45$, providing the practitioner with a design knob.

\textbf{(v) Truth-model assumptions.} The YA dynamics assume a Keplerian chief orbit; $J_2$ and higher zonal harmonics are absorbed into the additive disturbance $\wmax$. The truth-model validation of \Cref{sec:j2_truth} confirms this is empirically adequate at MSRE eccentricity over the $\sim 12{,}000$~s rendezvous time scale, but a rigorous Mars-orbit guarantee would require incorporating the secular $J_2$ drift into the disturbance budget or switching to an osculating-element formulation.

\subsection{Reproducibility}
\label{sec:repro}

The eight-step controller implementation recipe (YA STM evaluation, tube gain via DLQR, mass-aware disturbance bound, error-bound propagation, constraint tightening, terminal cost, QP solve, sanity checks) and the complete parameter listing for the approach-corridor scenario (orbit, geometry, MPC tuning, success criteria, MC protocol) are provided in the open-source repository accompanying this paper.\footnote{The complete Python implementation, including the YA STM evaluator, the OSQP interface, the offline terminal-set construction, and the Monte Carlo seed protocol, will be released at \texttt{https://github.com/iskender9961/2026-TubeBasedRVD} upon acceptance.} Together with the seed protocol described in \Cref{sec:mc_protocol}, every result in this paper is bit-reproducible.

\section{Conclusion}
\label{sec:conclusion}

Elliptical orbits introduce time-varying dynamics that invalidate the constant-matrix assumptions of standard tube MPC. The present paper specialises the LTV tube MPC framework~\cite{Bumroongsri2015,YuCannon2016,Kohler2019} to YA relative dynamics with four contributions: a Perron--Frobenius spectral-radius certificate yielding a closed-form eccentricity envelope ($e_{\max}\approx 0.661$ for MSRE, $3.23\times$ safety factor); a multiplicative-to-additive conversion that folds mass/thrust uncertainty into the same per-step recursion through the time-varying $B_k$; and an offline ellipsoidal RPI terminal set giving recursive feasibility (\Cref{thm:recursive}) and an explicit ISS gain $\gamma_{\mathrm{ISS}}\approx 1.16$.

A 300-trial paired Monte Carlo at MSRE ($e=0.204$, $\pm 500$~m corridor) confirms tube MPC's value as a safety-certification tool: $95\%$ safety success vs $0\%$ for every baseline at comparable fuel cost. The extreme tier signals infeasibility through the LQR fallback. Multi-phase scenarios and a nonlinear ECI$+J_2$ truth-model validation confirm operational generality despite $J_2$ drift of $7.7\times$ the assumed budget.

Future work: a periodic gain $K(\nu)$ extending the envelope beyond $e=0.66$, online disturbance identification to narrow the tube, and SCP-with-tube for non-convex keep-out zones.

\appendix

\section{Ellipsoidal Terminal Invariant Set: Construction and Proof}
\label{app:terminal_set}

This appendix derives the level set $\alpha$ used in~\eqref{eq:Xf}--\eqref{eq:alpha} and proves \Cref{lem:terminal}. The construction follows the standard quadratic-Lyapunov RPI argument adapted from Kolmanovsky and Gilbert~\cite{KolmanovskyGilbert1998} to the LTV YA closed loop.

\subsection{Setup}

Let $\zeta_k := \xvec_k - \xvec_s$ denote the offset from the setpoint. The closed-loop dynamics under the static feedback $u_k=K\zeta_k$ are
\begin{equation}\label{eq:cl_offset}
  \zeta_{k+1} \;=\; A_{\mathrm{cl},k}\,\zeta_k \;+\; \wvec_k,\qquad |\wvec_k|\le\wmax_\infty,
\end{equation}
where $\wmax_\infty:=\wmax+\delta_{m,\max}\max_k|B_k|\,u_{\max}$ is the orbit-maximised additive disturbance bound.

Let $P\succ 0$ be the DARE solution at the worst-case orbit phase $\nu_\star$:
\begin{equation}\label{eq:dare_app}
  P = A_\star^\top P A_\star - A_\star^\top P B_\star(R_K + B_\star^\top P B_\star)^{-1}B_\star^\top P A_\star + Q_K + \varepsilon I,
\end{equation}
with $A_\star := A(\nu_\star)$, $B_\star := B(\nu_\star)$. The gain $K$ is computed from the same DARE.

Define the contraction-loss ratio
\begin{equation}\label{eq:lambda_star}
  \lambda_\star \;:=\; \max_k \lambda_{\max}\!\left(P^{-1/2}\,A_{\mathrm{cl},k}^\top P A_{\mathrm{cl},k}\,P^{-1/2}\right).
\end{equation}

\subsection{Sufficient condition $\lambda_\star<1$}

The DARE solution at $\nu_\star$ satisfies $A_\star^\top P A_\star - P \preceq -Q_K - \varepsilon I \prec 0$, so $P$ is a Lyapunov matrix for $A_\star$. For neighbouring orbit phases the LTV closed-loop matrix $A_{\mathrm{cl},k}$ deviates continuously from $A_\star$; by continuity of $\lambda_{\max}$, there exists a neighbourhood $\mathcal{N}_\star$ of $\nu_\star$ on which $\lambda_{\max}(P^{-1/2}A_{\mathrm{cl},k}^\top P A_{\mathrm{cl},k}P^{-1/2})<1$. The construction of $\nu_\star$ as the \emph{arg-max} of this quantity over the orbit guarantees that the bound extends to all phases: by definition $\lambda_\star = \max_k(\cdot)$ is the worst-case value, and the design ensures $\lambda_\star<1$. Numerically (MSRE design), $\lambda_\star=0.016$, comfortably below unity.

\subsection{Level-set choice}

Given $\zeta_k$ with $\zeta_k^\top P\zeta_k\le\alpha$, the squared $P$-norm of the next state is
\[
  \zeta_{k+1}^\top P\zeta_{k+1} = (A_{\mathrm{cl},k}\zeta_k+\wvec_k)^\top P (A_{\mathrm{cl},k}\zeta_k+\wvec_k).
\]
Expanding and applying Young's inequality $2a^\top b \le \varepsilon_*\, a^\top a + (1/\varepsilon_*)\,b^\top b$ to the cross term with weights determined by $\varepsilon_*>0$,
\begin{align*}
  \zeta_{k+1}^\top P\zeta_{k+1}
  &\le (1+\varepsilon_*)\,\zeta_k^\top A_{\mathrm{cl},k}^\top P A_{\mathrm{cl},k}\zeta_k \;+\; (1+1/\varepsilon_*)\,\wvec_k^\top P\wvec_k \\
  &\le (1+\varepsilon_*)\lambda_\star\,\zeta_k^\top P\zeta_k \;+\; (1+1/\varepsilon_*)\,\wmax_\infty^\top |P|\,\wmax_\infty .
\end{align*}
The element-wise absolute value in the disturbance term is necessary: $|\wvec_k|\le\wmax_\infty$ constrains $\wvec_k$ to a box, and $\sup_{|\wvec|\le\wmax_\infty}\wvec^\top P\wvec=\wmax_\infty^\top|P|\wmax_\infty$, which exceeds $\wmax_\infty^\top P\wmax_\infty$ whenever $P$ has negative entries. For the MSRE design the two agree to within $10^{-8}$ relative, so the numerical value of $\alpha$ is unaffected.
Requiring $\zeta_{k+1}^\top P\zeta_{k+1}\le\alpha$ for any $\zeta_k$ with $\zeta_k^\top P\zeta_k\le\alpha$ yields the sufficient condition
\[
  (1+\varepsilon_*)\lambda_\star\,\alpha \;+\; (1+1/\varepsilon_*)\wmax_\infty^\top |P|\wmax_\infty \;\le\; \alpha,
\]
which, provided $(1+\varepsilon_*)\lambda_\star<1$, rearranges to~\eqref{eq:alpha}.

\paragraph{Optimal $\varepsilon_*$.} The right-hand side of~\eqref{eq:alpha} is minimised by setting the derivative w.r.t.\ $\varepsilon_*$ to zero, yielding
\begin{equation}\label{eq:eps_star}
  \varepsilon_*^{\mathrm{opt}} \;=\; \lambda_\star^{-1/2}-1,
\end{equation}
which is admissible for every $\lambda_\star\in(0,1)$ since
$\lambda_\star^{-1/2}-1<(1-\lambda_\star)/\lambda_\star$ throughout that range.
For the MSRE design ($\lambda_\star=0.016$) this gives $\varepsilon_*=6.90$ and
$\alpha\approx 2.3\times 10^7$ in $Q_K$-consistent units, matching the value
returned by the offline routine of \Cref{alg:terminal}.

\subsection{Proof of \Cref{lem:terminal}}

\begin{proof}
Take any $\zeta_k$ with $\zeta_k^\top P\zeta_k\le\alpha$ and any $\wvec_k$ with $|\wvec_k|\le\wmax_\infty$. The chain of inequalities above gives $\zeta_{k+1}^\top P\zeta_{k+1}\le\alpha$ by the choice of $\alpha$ in~\eqref{eq:alpha}. Therefore $\zeta_{k+1}\in\mathcal{X}_f-\xvec_s$, i.e.\ $\xvec_{k+1}\in\mathcal{X}_f$. The argument holds for every $k$ because the bound used in $\lambda_\star$ is the orbit-maximised quantity~\eqref{eq:lambda_star}.
\end{proof}

\subsection{Tube deflation and shift consistency}

The proof of \Cref{thm:recursive} used $\tilde\emax_j\le\emax_{j+1}$. This is a direct consequence of the recursion~\eqref{eq:emax_prop}: when the QP at time $k{+}1$ reinitialises $\tilde\emax_0=\mathbf 0$, the new shifted matrix sequence $\{|A_{\mathrm{cl},k+1+j}|\}_{j=0}^{N-1}$ together with shifted $\{\wmax_{j+1}\}$ produces a recursion whose $j$-th iterate is bounded by $\emax_{j+1}$ of the previous step (since the latter starts from $\emax_1\ne 0$ rather than from $\mathbf 0$). The shift bound $\|P^{1/2}(\tilde\zvec_{N-1}-\xvec_s)\|_2\le \sqrt{\alpha_{N-1}}+\|P^{1/2}\emax_{N-1}\|_2$ used in the proof follows from the triangle inequality applied to $\tilde\zvec_{N-1}=\zvec_N^\star+(\tilde\zvec_{N-1}-\zvec_N^\star)$ and the realised-error bound $|\tilde\zvec_{N-1}-\zvec_N^\star|\le\emax_{N-1}$ of \Cref{prop:ebound}.

\subsection{Numerical recipe}

The offline computation of $(P,K,\alpha)$ is summarised in \Cref{alg:terminal}.

\begin{algorithm}[htbp]
\caption{Offline computation of the ellipsoidal terminal set $\mathcal{X}_f$.}
\label{alg:terminal}
\begin{algorithmic}[1]
\REQUIRE Orbit parameters $(a,e,\mu)$, sampling $T_s$, $Q_K,R_K,\varepsilon,\wmax,\delta_{m,\max},u_{\max}$
\STATE Build $\{A_k,B_k\}$ via YA STM on a $200$-point true-anomaly grid
\STATE For each $k$, compute $P_k$ from DARE$(A_k,B_k,Q_K{+}\varepsilon I,R_K)$ and $K_k$ from the same DARE
\STATE Find $\nu_\star=\arg\max_k\lambda_{\max}(P_k^{-1/2}A_{\mathrm{cl},k}^\top P_k A_{\mathrm{cl},k}P_k^{-1/2})$
\STATE Set $P:=P_{\nu_\star}$, $K:=K_{\nu_\star}$
\STATE $\lambda_\star\gets\max_k\lambda_{\max}(P^{-1/2}A_{\mathrm{cl},k}^\top P A_{\mathrm{cl},k}P^{-1/2})$ (verify $\lambda_\star<1$)
\STATE $\wmax_\infty\gets\wmax+\delta_{m,\max}\max_k|B_k|\,u_{\max}$
\STATE Choose $\varepsilon_*$ from~\eqref{eq:eps_star} (or bisection minimisation of $\alpha(\varepsilon_*)$)
\STATE $\alpha\gets\dfrac{(1+1/\varepsilon_*)\wmax_\infty^\top |P|\wmax_\infty}{1-(1+\varepsilon_*)\lambda_\star}$
\RETURN $(P,K,\alpha)$
\end{algorithmic}
\end{algorithm}

\section{PD Baseline: Tuning Principle and Stability Analysis}
\label{app:pd_tuning}

The PD baseline in \Cref{tab:params} uses $K_p=5\times 10^{-5}I_3$ and $K_d=3\times 10^{-2}I_3$. These values are not arbitrary; they follow from a closed-form impulsive-thrust pole-placement rule that the author derived for the YA-LTV plant at the design point $\nu_0=0$. The derivation is recorded here so that the baseline can be reproduced and so that the comparison against MPC controllers is fair (i.e.\ the PD baseline is not deliberately handicapped).

\subsection{Impulsive-thrust PD on relative dynamics}

For an impulsive delta-$v$ controller acting on the relative position/velocity state $\xvec=[\mathbf{r};\,\mathbf{v}]$, the per-axis closed-loop characteristic polynomial under the saturated PD law $\uvec_k=\mathrm{sat}(-K_p\mathbf{r}_k-K_d\mathbf{v}_k)$ approximates (within saturation) a second-order discrete-time system with natural frequency $\omega_n$ and damping $\zeta$ given by
\begin{equation}\label{eq:pd_omega_zeta}
  \omega_n \;\approx\; \sqrt{K_p/T_s},\qquad \zeta \;\approx\; \tfrac{1}{2}K_d/\sqrt{K_p T_s}.
\end{equation}
The target settling time over an approach corridor of length $L\approx 14$~km at $T_s=200$~s is approximately $25T_s=5000$~s, which corresponds to $\omega_n T_s\approx 0.1$ rad/sample.

Substituting $\omega_n T_s = 0.1$ and $T_s=200$~s into~\eqref{eq:pd_omega_zeta} gives $K_p = (\omega_n)^2 T_s = (0.1/200)^2\cdot 200\approx 5\times 10^{-5}$, matching \Cref{tab:params}. The rate gain then sets the damping: $K_d=2\zeta\sqrt{K_pT_s}$, so the tabulated $K_d=3\times 10^{-2}$ corresponds to $\zeta=K_d/(2\sqrt{K_pT_s})=0.15$, a lightly damped design. A critically damped choice $\zeta=0.7$ would instead require $K_d=0.14$.

\subsection{Stability check}

In the unsaturated regime, the discrete closed-loop poles are $z=1-K_d T_s\pm\sqrt{(K_d T_s)^2-4K_p T_s^2}$ in each axis (decoupled approximation). For the tabulated gains, $K_d T_s=6$ and $K_p T_s^2=2$, giving $z=-5\pm\sqrt{28}$, that is $z_1=0.29$ and $z_2=-10.29$. The second pole lies well outside the unit circle: at $T_s=200$~s the continuous-time tuning above is sampled far too coarsely, and the discrete loop is unstable. Because the position error never exceeds $u_{\max}/K_p=100$~km on the $15$~km approach, the command never saturates, so nothing arrests the divergence.

This is the mechanism behind the PD baseline's behaviour in \Cref{tab:results_approach}: it does not merely violate the corridor, it diverges to a terminal error above $49$~km. The defect is the sampling rate, not the pole-placement rule: the same $(K_p,K_d)$ at $T_s\lesssim 30$~s would place both poles inside the unit circle. A fixed-gain PD is retained here as the weakest, model-free reference point rather than as a competitively tuned baseline, and no claim is made that these gains are the best available static feedback at $T_s=200$~s.

\subsection{Why PD nevertheless fails the safety metric}

Two effects compound. The loop is unstable at $T_s=200$~s, as shown above, and it is also constraint-blind: it applies $\uvec_k=-K_p\mathbf{r}_k-K_d\mathbf{v}_k$ regardless of the cross-track corridor, so even a stable retuning would drift the deputy into the corridor wall under a perpendicular disturbance. Constraint-blindness is structural and cannot be tuned away; the instability can, at a faster sampling rate. The PD row of \Cref{tab:results_approach} should therefore be read as a floor, not as the best attainable static feedback.


\begin{thebibliography}{40}

\bibitem[Innocenti et~al.(2021)]{ClearSpace2021}
Innocenti, L., et al., ``ClearSpace-1: Europe's First Active Debris Removal Mission,''
\emph{Proceedings of the 8th European Conference on Space Debris}, ESA/ESOC, Darmstadt, Germany, 2021.

\bibitem[Forshaw et~al.(2022)]{Astroscale2022}
Forshaw, J.~L., et al., ``ELSA-d: An In-Orbit End-of-Life Demonstration Mission,''
\emph{Acta Astronautica}, Vol.~198, 2022, pp.~297--310.
\doi{10.1016/j.actaastro.2022.06.018}

\bibitem[Fehse(2003)]{Fehse2003}
Fehse, W., \emph{Automated Rendezvous and Docking of Spacecraft}, Cambridge Univ.\ Press, Cambridge, England, U.K., 2003.
\doi{10.1017/CBO9780511543388}

\bibitem[Clohessy and Wiltshire(1960)]{HCW1960}
Clohessy, W.~H., and Wiltshire, R.~S., ``Terminal Guidance System for Satellite Rendezvous,''
\emph{Journal of the Aerospace Sciences}, Vol.~27, No.~9, 1960, pp.~653--658.
\doi{10.2514/8.8704}

\bibitem[Weiss et~al.(2015)]{Weiss2015}
Weiss, A., Baldwin, M., Erwin, R.~S., and Kolmanovsky, I., ``Model Predictive Control for Spacecraft Rendezvous and Docking: Strategies for Handling Constraints and Case Studies,''
\emph{Journal of Guidance, Control, and Dynamics}, Vol.~38, No.~11, 2015, pp.~2158--2171.
\doi{10.2514/1.G001117}

\bibitem[Richards et~al.(2002)]{Richards2002}
Richards, A., Schouwenaars, T., How, J.~P., and Feron, E., ``Spacecraft Trajectory Planning with Avoidance Constraints Using Mixed-Integer Linear Programming,''
\emph{Journal of Guidance, Control, and Dynamics}, Vol.~25, No.~4, 2002, pp.~755--764.
\doi{10.2514/2.4943}

\bibitem[Gavil\'an et~al.(2012)]{Gavilan2012}
Gavil\'an, F., V\'azquez, R., and Camacho, E.~F., ``Chance-Constrained Model Predictive Control for Spacecraft Rendezvous with Disturbance Estimation,''
\emph{Control Engineering Practice}, Vol.~20, No.~2, 2012, pp.~111--122.
\doi{10.1016/j.conengprac.2011.09.006}

\bibitem[Langson et~al.(2004)]{Langson2004}
Langson, W., Chryssochoos, I., Rakovi\'c, S.~V., and Mayne, D.~Q., ``Robust Model Predictive Control Using Tubes,''
\emph{Automatica}, Vol.~40, No.~1, 2004, pp.~125--133.
\doi{10.1016/j.automatica.2003.08.009}

\bibitem[Mayne et~al.(2005)]{Mayne2005}
Mayne, D.~Q., Seron, M.~M., and Rakovi\'c, S.~V., ``Robust Model Predictive Control of Constrained Linear Systems with Bounded Disturbances,''
\emph{Automatica}, Vol.~41, No.~2, 2005, pp.~219--224.
\doi{10.1016/j.automatica.2004.08.019}

\bibitem[Rakovi\'c et~al.(2005)]{Rakovic2005}
Rakovi\'c, S.~V., Kerrigan, E.~C., Kouramas, K.~I., and Mayne, D.~Q., ``Invariant Approximations of the Minimal Robust Positively Invariant Set,''
\emph{IEEE Transactions on Automatic Control}, Vol.~50, No.~3, 2005, pp.~406--410.
\doi{10.1109/TAC.2005.843854}

\bibitem[Specht and Mooij(2023)]{Specht2023}
Specht, B., and Mooij, E., ``Autonomous Rendezvous with Non-Cooperative Spacecraft Using Tube-Based Robust Model Predictive Control,''
\emph{Journal of Guidance, Control, and Dynamics}, Vol.~46, No.~7, 2023, pp.~1234--1250.
\doi{10.2514/1.G007160}

\bibitem[Oestreich and Linares(2023)]{Oestreich2023}
Oestreich, C.~E., and Linares, R., ``Autonomous Satellite Servicing with Tube-Based Model Predictive Control and Uncertainty Identification,''
\emph{Journal of Guidance, Control, and Dynamics}, Vol.~46, No.~8, 2023, pp.~1510--1525.
\doi{10.2514/1.G007337}

\bibitem[Mammarella et~al.(2017)]{Mammarella2017}
Mammarella, M., Capello, E., and Romano, M., ``Spacecraft Proximity Operations via Tube-Based Robust Model Predictive Control,''
\emph{Proceedings of the 68th International Astronautical Congress}, IAC-17-C1.7.3, Adelaide, Australia, 2017.

\bibitem[Mammarella and Capello(2020)]{Mammarella2020}
Mammarella, M., and Capello, E., ``Tube-Based Robust Model Predictive Control for Spacecraft Proximity Operations in the Presence of Persistent Disturbance,''
\emph{Aerospace Science and Technology}, Vol.~104, 2020, p.~105940.
\doi{10.1016/j.ast.2020.105940}

\bibitem[Yamanaka and Ankersen(2002)]{YA_original}
Yamanaka, K., and Ankersen, F., ``New State Transition Matrix for Relative Motion on an Arbitrary Elliptical Orbit,''
\emph{Journal of Guidance, Control, and Dynamics}, Vol.~25, No.~1, 2002, pp.~60--66.
\doi{10.2514/2.4875}

\bibitem[Tschauner and Hempel(1965)]{TH1965}
Tschauner, J., and Hempel, P., ``Rendezvous zu einem in elliptischer Bahn umlaufenden Ziel,''
\emph{Acta Astronautica}, Vol.~11, No.~2, 1965, pp.~104--109.

\bibitem[Horn and Johnson(2013)]{Horn_Johnson}
Horn, R.~A., and Johnson, C.~R., \emph{Matrix Analysis}, 2nd ed., Cambridge Univ.\ Press, Cambridge, England, U.K., 2013.

\bibitem[Rawlings et~al.(2017)]{Rawlings_Mayne}
Rawlings, J.~B., Mayne, D.~Q., and Diehl, M.~M., \emph{Model Predictive Control: Theory, Computation, and Design}, 2nd ed., Nob Hill Publishing, Madison, WI, 2017.

\bibitem[Stellato et~al.(2020)]{OSQP}
Stellato, B., Banjac, G., Goulart, P., Bemporad, A., and Boyd, S., ``OSQP: An Operator Splitting Solver for Quadratic Programs,''
\emph{Mathematical Programming Computation}, Vol.~12, 2020, pp.~637--672.
\doi{10.1007/s12532-020-00179-2}

\bibitem[Hartley et~al.(2012)]{Hartley2012}
Hartley, E.~N., Trodden, P.~A., Richards, A.~G., and Maciejowski, J.~M., ``Model Predictive Control System Design and Implementation for Spacecraft Rendezvous,''
\emph{Control Engineering Practice}, Vol.~20, No.~7, 2012, pp.~695--713.
\doi{10.1016/j.conengprac.2012.03.009}

\bibitem[Yu and Cannon(2016)]{YuCannon2016}
Yu, S., and Cannon, M., ``Robust Model Predictive Control for Linear Time-Varying Systems with Bounded Disturbances,''
\emph{International Journal of Control, Automation and Systems}, Vol.~14, No.~2, 2016, pp.~498--504.

\bibitem[K\"ohler et~al.(2020)]{Kohler2019}
K\"ohler, J., M\"uller, M.~A., and Allg\"ower, F., ``A Nonlinear Tracking Model Predictive Control Scheme for Dynamic Target Signals,''
\emph{Automatica}, Vol.~118, 2020, 109030.
\doi{10.1016/j.automatica.2020.109030}

\bibitem[Iskender and Redondo Gutierrez(2026a)]{companion_slosh}
Iskender, O.~B., and Redondo Gutierrez, J.~L., ``Slosh-Augmented Tube MPC for Spacecraft Proximity Operations: Coupled Dynamics and Partial Robustness Guarantees,''
\emph{Acta Astronautica}, 2026 (in review).

\bibitem[Iskender(2026b)]{companion_flex}
Iskender, O.~B., ``Vibration-Aware Tube MPC for Flexible Spacecraft Rendezvous on Elliptical Orbits,''
\emph{Journal of Guidance, Control, and Dynamics}, 2026 (in review).

\bibitem[Iskender and Ling(2026)]{Iskender2026EllipticalTube}
Iskender, O.~B., and Ling, K.~V., ``Horizon-Dependent Tube MPC for Elliptical-Orbit Rendezvous Under Mass Uncertainty,''
\emph{Proceedings of the 77th International Astronautical Congress}, International Astronautical Federation, Antalya, T\"{u}rkiye, 2026.

\bibitem[Iskender(2020)]{Iskender2020}
Iskender, O.~B., \emph{Model Predictive Control for Spacecraft Rendezvous and Docking with Uncooperative Targets},
Ph.D. dissertation, Nanyang Technological University, Singapore, 2020.

\bibitem[Iskender et~al.(2019)]{Iskender2019}
Iskender, O.~B., Ling, K.-V., Simonini, L., Schlotterer, M., Seelbinder, D., Theil, S., and Maciejowski, J.~M.,
``Dual Quaternion Based Autonomous Rendezvous and Docking via Model Predictive Control,''
\emph{Proceedings of the 70th International Astronautical Congress}, International Astronautical Federation, 2019, pp.~1--16.

\bibitem[Mayne(2014)]{Mayne2014}
Mayne, D.~Q., ``Model Predictive Control: Recent Developments and Future Promise,''
\emph{Automatica}, Vol.~50, No.~12, 2014, pp.~2967--2986.
\doi{10.1016/j.automatica.2014.10.128}

\bibitem[Sullivan et~al.(2017)]{Sullivan2017}
Sullivan, J., Grimberg, S., and D'Amico, S., ``Comprehensive Survey and Assessment of Spacecraft Relative Motion Dynamics Models,''
\emph{Journal of Guidance, Control, and Dynamics}, Vol.~40, No.~8, 2017, pp.~1837--1859.
\doi{10.2514/1.G002309}

\bibitem[Eren et~al.(2017)]{Eren2017}
Eren, U., Prach, A., Ko\c{c}er, B.~B., Rakovi\'c, S.~V., Kayacan, E., and A\c{c}\i kme\c{s}e, B., ``Model Predictive Control in Aerospace Systems: Current State and Opportunities,''
\emph{Journal of Guidance, Control, and Dynamics}, Vol.~40, No.~7, 2017, pp.~1541--1566.
\doi{10.2514/1.G002507}

\bibitem[Inalhan et~al.(2002)]{Inalhan2002}
Inalhan, G., Tillerson, M., and How, J.~P., ``Relative Dynamics and Control of Spacecraft Formations in Eccentric Orbits,''
\emph{Journal of Guidance, Control, and Dynamics}, Vol.~25, No.~1, 2002, pp.~48--59.
\doi{10.2514/2.4874}

\bibitem[Hartley et~al.(2015)]{Hartley2015b}
Hartley, E.~N., Trodden, P.~A., Richards, A.~G., and Maciejowski, J.~M., ``Field Programmable Gate Array Based Predictive Control System for Spacecraft Rendezvous in Elliptical Orbits,''
\emph{Optimal Control Applications and Methods}, Vol.~36, No.~5, 2015, pp.~585--607.
\doi{10.1002/oca.2117}


\bibitem[Dong et~al.(2020)]{Dong2020}
Dong, K., Luo, J., Dang, Z., and Wei, L., ``Tube-Based Robust Output Feedback Model Predictive Control for Autonomous Rendezvous and Docking with a Tumbling Target,''
\emph{Advances in Space Research}, Vol.~65, No.~4, 2020, pp.~1158--1181.
\doi{10.1016/j.asr.2019.11.014}

\bibitem[Dong et~al.(2024)]{Dong2024}
Dong, K., Luo, J., and Ni, Z., ``A Robust Model Predictive Control Unified Framework for Autonomous Rendezvous and Docking with a Tumbling Target,''
\emph{Advances in Space Research}, Vol.~74, No.~5, 2024, pp.~2270--2287.
\doi{10.1016/j.asr.2024.05.070}

\bibitem[Zhu et~al.(2018)]{Zhu2018}
Zhu, S., Sun, R., Wang, J., Wang, J., and Shao, X., ``Robust Model Predictive Control for Multi-Step Short Range Spacecraft Rendezvous,''
\emph{Advances in Space Research}, Vol.~62, No.~1, 2018, pp.~111--126.
\doi{10.1016/j.asr.2018.03.037}

\bibitem[Kang et~al.(2025)]{Kang2025}
Kang, D.-E., Eun, Y., Cho, H., and Park, S.-Y., ``Controller-Matching-Based Robust Model Predictive Control for Spacecraft Rendezvous and Docking,''
\emph{Advances in Space Research}, Vol.~76, No.~10, 2025, pp.~6355--6378.
\doi{10.1016/j.asr.2025.08.039}

\bibitem[Wang et~al.(2023)]{WangASR2023}
Wang, X., Li, Y., Zhang, X., Zhang, R., and Yang, D., ``Model Predictive Control for Close-Proximity Maneuvering of Spacecraft with Adaptive Convexification of Collision Avoidance Constraints,''
\emph{Advances in Space Research}, Vol.~71, No.~1, 2023, pp.~477--491.
\doi{10.1016/j.asr.2022.08.089}

\bibitem[Lim et~al.(2018)]{Lim2018}
Lim, Y., Jung, Y., and Bang, H., ``Robust Model Predictive Control for Satellite Formation Keeping with Eccentricity/Inclination Vector Separation,''
\emph{Advances in Space Research}, Vol.~61, No.~10, 2018, pp.~2661--2672.
\doi{10.1016/j.asr.2018.02.036}

\bibitem[Zheng and Luo(2024)]{Zheng2024}
Zheng, M., and Luo, J., ``Robust Trajectory Planning for Non-Cooperative Target Rendezvous Based on Closed-Loop Uncertainty Analysis,''
\emph{Advances in Space Research}, Vol.~74, No.~4, 2024, pp.~1932--1949.
\doi{10.1016/j.asr.2024.05.040}

\bibitem[Bumroongsri(2015)]{Bumroongsri2015}
Bumroongsri, P., ``Tube-Based Robust MPC for Linear Time-Varying Systems with Bounded Disturbances,''
\emph{International Journal of Control, Automation and Systems}, Vol.~13, No.~3, 2015, pp.~620--625.
\doi{10.1007/s12555-014-0182-5}

\bibitem[Bokor et~al.(2025)]{Bokor2025}
Bokor, J., et~al., ``Robust Model Predictive Control for Spacecraft Rendezvous Under Sector-Bounded Nonlinearities,''
\emph{IEEE Control Systems Letters}, 2025.
\href{https://ieeexplore.ieee.org/document/11245213/}{\texttt{ieeexplore.ieee.org/document/11245213}}

\bibitem[Kolmanovsky and Gilbert(1998)]{KolmanovskyGilbert1998}
Kolmanovsky, I., and Gilbert, E.~G., ``Theory and Computation of Disturbance Invariant Sets for Discrete-Time Linear Systems,''
\emph{Mathematical Problems in Engineering}, Vol.~4, No.~4, 1998, pp.~317--367.
\doi{10.1155/S1024123X98000866}


\bibitem[Boyd et~al.(1994)]{Boyd1994}
Boyd, S., El~Ghaoui, L., Feron, E., and Balakrishnan, V., \emph{Linear Matrix Inequalities in System and Control Theory}, SIAM, Philadelphia, PA, 1994.

\bibitem[Althoff(2013)]{Althoff2014}
Althoff, M., ``Reachability Analysis of Nonlinear Systems Using Conservative Polynomialization and Non-Convex Sets,''
\emph{Proceedings of the 16th International Conference on Hybrid Systems: Computation and Control (HSCC)}, ACM, Philadelphia, PA, 2013, pp.~173--182.
\doi{10.1145/2461328.2461358}

\end{thebibliography}

\end{document}